\definecolor{VibrantBlue}{HTML}{0077BB}
\definecolor{VibrantCyan}{HTML}{33BBEE}
\definecolor{VibrantTeal}{HTML}{009988}
\definecolor{VibrantOrange}{HTML}{EE7733}
\definecolor{VibrantRed}{HTML}{CC3311}
\definecolor{VibrantMagenta}{HTML}{EE3377}
\definecolor{VibrantGrey}{HTML}{BBBBBB}
\theoremstyle{definition}
\newtheorem{mathDef}{Definition}
\newtheorem{mathRem}[mathDef]{Remark}
\theoremstyle{plain}
\newtheorem{mathLem}[mathDef]{Lemma}
\newtheorem{mathProp}[mathDef]{Proposition}
\newtheorem{mathThm}[mathDef]{Theorem}
\newcommand{\N}{\mathbb{N}}
\newcommand{\Z}{\mathbb{Z}}
\newcommand{\R}{\mathbb{R}}
\newcommand{\Prob}{\mathbb{P}}
\newcommand{\Normal}{\mathcal{N}}
\DeclareMathOperator{\IG}{IG}
\DeclareMathOperator{\Sh}{Sh}
\DeclareMathOperator{\Perm}{Perm}
\begin{document}

\title{eXplainable AI for Quantum Machine Learning}
\begin{abstract}
    Parametrized Quantum Circuits (PQCs) enable a novel method for machine learning (ML). However, from a computational point of view they present a challenge to existing eXplainable AI (xAI) methods. On the one hand, measurements on quantum circuits introduce probabilistic errors which impact the convergence of these methods. On the other hand, the phase space of a quantum circuit expands exponentially with the number of qubits, complicating efforts to execute xAI methods in polynomial time. In this paper we will discuss the performance of established xAI methods, such as Baseline SHAP and Integrated Gradients. Using the internal mechanics of PQCs we study ways to speed up their computation.
\end{abstract}
\author{Patrick Steinmüller}
\email{patrick.steinmueller@d-fine.de}
\affiliation{d-fine GmbH}
\author{Tobias Schulz}
\affiliation{d-fine GmbH}
\email{tobias.schulz@d-fine.de}
\author{Ferdinand Graf}
\email{ferdinand.graf@d-fine.de}
\affiliation{d-fine GmbH}
\author{Daniel Herr}
\email{daniel.herr@d-fine.ch}
\affiliation{d-fine AG}
\date{October 2022}

\maketitle

\section{Introduction}
Since most ML models are too complex for humans to properly interpret, methods have been devised to study these models and to provide the ability to understand how models arrive at certain predictions. This explanation is valuable for model developers, who need to understand e.g. the limitations from a model on a global perspective to adjust the model architecture or the training data. It is also valuable for model users, who are interested in the factors that drive specific model predictions. Hence, those methods can improve the overall model quality as well as model acceptance. In addition, governmental initiatives like the European Union's `Artificial Intelligence Act'~\cite{AIAct} require that models applied in high-risk areas (e.g. access to education and essential private services) provide a minimum level of transparency for users, which highlights the importance of xAI methods.  

In this paper, we will study models based on so-called Parameterized Quantum Circuits (PQCs)~\cite{PQCReview}, which are also known as variational quantum circuits or quantum neural networks~\cite{farhi2018classification}, and how xAI methods can be adjusted to these type of models.

\begin{mathDef}[Parametrized Quantum Circuit (PQC)]
\label{def:PQC}
A PQC with \(N\) Qubits and \(n+1\) features consists of \(n+1\) single-qubit rotation gates \(R_{\cdot}^{(j)}(\varphi_i)\), as well as two-qubit entanglement gates. Here, \(j=1,\ldots,N\) denotes on which qubit the gate acts on, \(\cdot=x,y,z\) is a placeholder for the specific axis of rotation, and \(i=1,\ldots,n+1\) indexes the parameters. For our purposes measurements in this circuit are always performed at the end and are in the computational basis.
\end{mathDef}

\begin{mathRem}
In some circuits features are re-uploaded. This means that the feature is encoded by some rotation several times at different parts of the circuit. In the following, we will continue the analysis without re-uploading. However, the re-uploading case is easily realised by just equalizing some features \(x_i = x_j = ...\).
\end{mathRem}

\begin{mathThm}[Output and Learning Behavior]
\label{thm:Output_And_Learning_Behavior}
For a quantum circuit as defined in Definition \ref{def:PQC} the following holds~\cite{Schuld2021}: The output is given as a truncated Fourier series with \(\Omega = \{-1,0,1\}^{n+1}\) as frequency spectrum
\begin{equation}
    f(\varphi) = \sum\limits_{\omega \in\Omega} a_\omega \sin\langle\omega, \varphi\rangle + b_\omega \cos\langle \omega, \varphi\rangle.
\end{equation}
\end{mathThm}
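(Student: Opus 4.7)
The plan is to follow the argument of~\cite{Schuld2021} and write the circuit output as a matrix element $f(\varphi) = \langle 0 | U(\varphi)^\dagger M U(\varphi) | 0 \rangle$, where $M$ is the observable corresponding to the computational basis measurement and $U(\varphi)$ is the total unitary of the circuit. I would split $U(\varphi)$ into a product alternating fixed unitaries $W_0,\ldots,W_{n+1}$ (collecting all entanglement gates and every operation not depending on a feature) with the single parametrized rotations $R^{(j_i)}_\cdot(\varphi_i) = \exp(-\tfrac{i}{2}\varphi_i P_i)$, where $P_i$ is a Pauli string (a Pauli on the qubit $j_i$ tensored with identity on the remaining qubits). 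The crucial property I would exploit is that each generator $P_i$ has spectrum $\{-1,+1\}$, so that $R(\varphi_i)$ is unitarily equivalent to a diagonal matrix with entries $e^{\pm i\varphi_i/2}$.

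\textbf{Main computation.} Next I would diagonalize each generator as $P_i = V_i D_i V_i^\dagger$ with $D_i$ diagonal in $\pm 1$, and absorb each $V_i,V_i^\dagger$ into the adjacent fixed unitaries, producing modified constant matrices $\widetilde W_0,\ldots,\widetilde W_{n+1}$. Expanding $f(\varphi)$ in the resulting product basis yields a multiple sum
\begin{equation}
f(\varphi) = \sum_{\mathbf j, \mathbf k} C_{\mathbf j,\mathbf k}\, \exp\!\Bigl(\tfrac{i}{2}\sum_{i=1}^{n+1} (\lambda_{k_i} - \lambda_{j_i})\,\varphi_i\Bigr),
\end{equation}
where $\mathbf j,\mathbf k$ range over basis indices of $(\C^2)^{\otimes N}$, $\lambda_\cdot \in \{-1,+1\}$ come from $D_i$, and the coefficients $C_{\mathbf j,\mathbf k}$ bundle all contributions from $\widetilde W$'s, $M$, and $|0\rangle$. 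Since $\lambda_{k_i} - \lambda_{j_i} \in \{-2,0,+2\}$, the frequency of $\varphi_i$ is an element of $\{-1,0,+1\}$, so the full frequency vector lies in $\Omega = \{-1,0,1\}^{n+1}$.

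\textbf{Reality and final form.} To finish, I would collect all terms with identical frequency vector $\omega$ into a single complex amplitude $c_\omega$, obtaining $f(\varphi) = \sum_{\omega\in\Omega} c_\omega e^{i\langle\omega,\varphi\rangle}$. Since $M$ is Hermitian, $f$ is real-valued, forcing $c_{-\omega} = \overline{c_\omega}$; pairing $\omega$ with $-\omega$ and writing $c_\omega = \tfrac{1}{2}(b_\omega - i a_\omega)$ then yields the claimed sine/cosine form with real coefficients $a_\omega,b_\omega$.

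\textbf{Main obstacle.} The conceptual steps are all elementary, so the real difficulty is bookkeeping: one must be careful that tensoring each single-qubit eigenbasis $V_i$ with the identity on the remaining $N-1$ qubits still yields commuting diagonal factors on the full $2^N$-dimensional Hilbert space, and that the diagonalizations for different rotation indices can be inserted in series without interfering with the intermediate entangling gates. Once this algebraic setup is handled correctly, the frequency structure follows from the two-point spectrum $\{-1,+1\}$ of every Pauli generator.
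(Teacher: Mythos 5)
Your proposal is correct: the decomposition into fixed unitaries alternating with Pauli-generated rotations, the diagonalization of each generator with eigenvalues $\pm 1$ giving per-feature frequency differences in $\{-1,0,1\}$, and the reality argument $c_{-\omega}=\overline{c_\omega}$ yielding the sine/cosine form is precisely the standard argument of the reference the paper cites for this theorem. Note that the paper itself supplies no proof of this statement (it is imported wholesale from~\cite{Schuld2021}), so your reconstruction is in fact the only proof on the table, and it matches that source's approach.
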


This means that the result of PQCs can be viewed as truncated Fourier Series. In our case the parameters \(a_\omega, b_\omega\) are real numbers and a \(\Z+i\Z\) multiple of \((1/\sqrt{2})^l\) for some exponent of \(l\). If the set of parameters is divided up into features \(\varphi\) and controls \(\theta\), and the feature parameters are allowed to be present repeatedly~\cite{reuploading}, then \(a_\omega, b_\omega\) are trigonometric polynomials in \(\theta\)~\cite{Schuld2021}. The structure of the circuit dictates the amount of control available in training the circuit.

\section{Literature Review}

With the general availability of early quantum computers in the cloud~\cite{IBMQExperience,AWSBracket,Azure} a new field developed in finding useful problems to tackle with the current generation of noisy quantum computers~\cite{Devitt2016,Zeng2017}. At the core of this effort are the aforementioned PQCs, as they have some inherent resistance against systematic errors of the devices. QAOA~\cite{farhi2014quantum} and VQE~\cite{Peruzzo2014} are the most famous algorithms that employ this parameterized approach. Later on, it has been repurposed for various quantum machine learning models~\cite{quantumAdvantage, Herr_2021}. There are still a lot of open questions regarding the effectiveness of training~\cite{McClean2018, Sweke2020stochasticgradient} and their expressivity~\cite{Abbas2021} as machine learning models.

Nevertheless, there is some interesting research into the behaviour of PQCs as machine learning models~\cite{Schuld2021}. We try to build upon this work by introducing explainable AI (xAI) to the field of quantum machine learning. Standard xAI methods might help elucidate the behavior of current PQC-based machine learning models. 

xAI for quantum machine-learning (QML) also seems to have a large overhead with simulation of quantum circuits. Recent advances were demonstrated by beating Google's quantum advantage experiment~\cite{Arute2019} using tensor network-based simulations~\cite{pan2021solving}. There are other approaches such as stabilizer simulations~\cite{PhysRevA.70.052328,Bravyi2019simulationofquantum}, which can simulate large numbers of qubits but scale unfavourably in the number of non-Clifford gates. We use the aforementioned relationship between PQCs and Fourier series in this paper~\cite{Schuld2021}, but expect that other xAI methods can be devised from other simulation approaches.

\section{Review of Model-Agnostic Explainability Methods}

There are many model-agnostic explainability methods currently available. Since our main concern is with PQC based models, we are going to assume that our models are at least continuously differentiable. The methods under consideration in our paper are KernelSHAP~\cite{Shap} and Integrated Gradients\cite{IntegratedGradients}.

In the following, \(f\) denotes the model, \(x\) the input value for which an explanation is sought and \(b\) a base value.

\subsection{Integrated Gradients (IG)}
IG~\cite{IntegratedGradients} is a fast method relying on evaluating the gradient \(\nabla f\) at equidistant points. Denote by \(\gamma_{i;N} = i/N x + (1- i/N)b; i = 0,\ldots, N\) an \(N\) mesh. For the purposes of computing IG efficiently we need to approximate the partial derivative. To do this, we move \(\gamma_{N;i}\) along the \(e\) axis by a shift \(\delta_e\). Using the trapezoid rule and a simple approximation for the partial derivative, we get:

\begin{equation}
    \label{eq:IG_values_definition}
    \begin{split}
        \IG(e)  &= \int\limits_0^1 \frac{\partial f}{\partial x_e} (\gamma(s)) \frac{\mathrm{d} \gamma_e}{\mathrm{d}s}(s) \mathrm{d}s\\
                &\approx \frac {x_e-b_e}{2N} \sum\limits_{i=0}^{N-1} \frac{\partial f}{\partial x_e} (\gamma_{N;i+1}) + \frac{\partial f}{\partial x_e} (\gamma_{N;i})\\
                &= \frac{x_e-b_e}{N}\sum\limits_{i=1}^{N-1} \frac{\partial f}{\partial x_e} (\gamma_{N;i}) \\
                &\quad + \frac{x_e-b_e}{2N}\left(\frac{\partial f}{\partial x_e} (x_e) + \frac{\partial f}{\partial x_e} (b_e)\right)\\
                &\approx \frac{x_e-b_e}{2N\delta_e}\sum\limits_{i=1}^{N-1} f(\gamma_{N;i} + \delta_e) - f(\gamma_{N;i} - \delta_e)
    \end{split}
\end{equation}

In the above derivation the end terms were neglected since for large \(N\) they will vanish to zero.

IG is a very fast method for differentiable models. It follows a path in a straight line from start \(b\) to end \(x\). This makes IG a good choice for large, high-dimensional models, especially in computer vision.

In equation \ref{eq:IG_values_definition} we present two ways of estimating IG values. One with gradients and in the last line with a specific gradient approximation. In settings where differentiable models have good gradient approximations readily available, the first version can be used. If not, the latter can be used as well.

\subsection{SHAP}

We are restricting our discussion of SHAP to Baseline SHAP (BS) which is a sub-variant of KernelSHAP. BS uses a single base value \(b \in \R^{n+1}\) and an input value \(x\in \R^{n+1}\). To get KernelSHAP from BS we can use BS with several base vectors \(b_1,\ldots \) and average the results over those.

Let \(F = [n+1]\) denote the set of features. Let \(e\in F\) the feature for which an explanation is desired. Let \(S\subseteq F\setminus \{e\}\) be a set of features. Denote by \(\overline S\) the opposite set in \(F\setminus \{e\}\) such that \(F=S\cup \overline S \cup \{e\}\). Next we define the intermediate vectors
\begin{equation}
    \label{eq:definition_of_baseline_intermediate_vectors}
    (g_S)_i =
    \begin{cases}
        x_i, & i\in S;\\
        b_i, & \text{otherwise}
    \end{cases}.
\end{equation}

Then the BS value for \(e\) is defined as

\begin{align}
    \label{eq:definition_of_baseline_shap_values}
    \begin{split}
        \Sh_f(e) &= \\
        & \frac 1{n+1}\sum\limits_{S\subseteq F\setminus\{e\}} \frac{|S|!(n-|S|)!}{n!}(f(g_{S\cup \{e\}}) - f(g_S))\\
        &= \frac 1{(n+1)!}\sum\limits_{P\in \Perm(F)} (f(g_{P_e\cup \{e\}}) - f(g_{P_e}))
    \end{split}
\end{align}

In the last line \(\Perm(F)\) denotes the set of permutations of \(F\) and \(P\) a specific permutation of \(F\). \(P_e\) denotes the elements of the permutation that occurred before feature \(e\). I.e. Let \(P=(1,3,2)\) then \(P_2 = \{1,3\}\).

\subsection{Comparison of IG and BS}

Both IG and BS are black-box methods. They do not generally pose many requirements for explainability. The strongest assumption is the requirement of models to be differentiable. This might exclude tree models and neural networks using step-functions as activation function. However, this can be alleviated by using a regularization procedure (i.e. folding against an appropriate derivative of a \(\mathcal{C}^\infty_c\) function).

Both IG and BS are path-dependent feature-perturbation methods. IG samples points along the line spanned by \((b,x)\) while BS samples its points from the set of vertices of the axes-aligned orthotope spanned by \((x,b)\). Here we can see their structural similarity.

BS and IG are model-centric explainability methods. Let \((X,\mathcal{F},\Prob)\) denote the probability space from which data is drawn to train the model \(f\). Both IG and BS will ignore the underlying correlations of the data. BS treats features as independent. In case of linear Models this can be overcome by a correlation correction if the data follows a multivariate Gaussian distribution. IG will assume a correlation of features along the direction of \(x-b\).

BS - SHAP in general - guarantees that features not contributing to the overall model output receive no attribution. The contribution is measured in terms of their marginal impact \(f(g_{S\cup\{e\}})-f(g_S)\). If that difference is always \(0\), e.g. in a linear model which applies a weight \(0\) to that feature, the SHAP value will be zero. IG on the other hand will not guarantee this. Specifically, if \(x_e-b_e\) increase alongside another feature \(x_{e'}-b_{e'}\), and the gradients in both \(e\) and \(e'\) are similar, then both features will have similar contributions.

Both IG and BS are linear in \(f\). This means that computing values for a linear combination of models \(f_i\), it is enough to compute the value for each \(f_i\) and then perform the linear combination.

Both IG and BS feature the concept of a base value. In computer vision tasks this value is often set to \(0\). However, other suitable choices are possible and depend on the problem at hand.

\subsection{Stability of IG and BS}

If a function \(f\) is executed on quantum hardware, the imperfect nature of the device will introduce a significant number of errors. Let \(f\) denote the true \emph{mathematical} function that is approximated by a parametrized quantum circuit. The approximation can be denoted by \(g \approx f + \epsilon\).

For a set of input values \(X = \{x_1, \ldots, x_S\}\), we have a sequence of outputs \(g_i = f(x_i) + \epsilon_i\). We assume that the individual errors \(\epsilon\) are identically distributed with zero-mean \(\mu=0\) and variance \(\sigma^2>0\). In equation \ref{eq:IG_values_definition} we need \(2(N-1)\) function evaluations for computing the IG-value along a single axis. We need \(2(N-1)n\) to compute IG-values for all features.

Let \(S\) be the set of evaluation points: \(S=\{\gamma_{N;i}-\delta_e, \gamma_{N;i}+\delta_e\}\) and \(s=|S|\) its size. Using \ref{eq:IG_values_definition} and plugging in our definition for \(g\):

\begin{align*}
    \IG_e(g) &= \IG_e(f) + \frac{x_e-b_e}{2N\delta_e}\sum\limits_{i=1}^{N-1} \epsilon_{\gamma_{N;i}+\delta_e} - \epsilon_{\gamma_{N;i}-\delta_e}\\
    \IG_e(g) &= \IG_e(f) + \underbrace{\frac{x_e-b_e}{2N\delta_e}\sum_{i\in S} \epsilon_i}_{=:X_{N}}
\end{align*}

Using the central limit theorem (CLT), the error \(X_{N}\) converges in distribution to \(\Normal\left(0, \frac{\sigma^2}{N}\right)\).

We employ the same approach in analysing the stability of BS. In equation \ref{eq:definition_of_baseline_shap_values} two definitions of BS were given. The summation over all permutations has duplicate function evaluations over single points, while the definition using the powerset of \(F\setminus \{e\}\) has the individual evaluations scaled by a term, weighing extremal points stronger than others.

Let us first consider the (inefficient) implementation via the permutation sum. If the function is evaluated at the same point for every permutation anew, then each error for the sum is independent. Using linearity of the SHAP values we have: \(\Sh_g(e) = \Sh_f(e) + \Sh_\epsilon(e)\) and further:

\begin{align*}
    \Sh_\epsilon(e) = \frac 1{(n+1)!}\sum\limits_{P\in \Perm(F)} \epsilon_F - \epsilon'_F.
\end{align*}

Using the CLT we get \(X_n = \Sh_\epsilon(e) \sim \Normal\left(0,\frac{2\sigma^2}{(n+1)!}\right)\). The inefficient algorithm reduces the impact of any error, while the number of function evaluations explodes significantly. When evaluating the error with the standard formulation we face a different scenario:

\begin{align}
    \Sh_\epsilon(e) &= \frac 1{(n+1)!}\sum\limits_{S\subseteq F\setminus \{e\}} (|S|!(n-|S|)!)( \epsilon_S - \epsilon'_S)\\
     &= \frac 1{n+1}\sum\limits_{s=0}^n\sum\limits_{S\subseteq F\setminus \{e\}; |S|=s} \binom{n}{s}^{-1}( \epsilon_S - \epsilon'_S).
\end{align}

The inner sum, sums over \(\binom{n}{s}\) many sets. For \(s=0\) or \(s=n\) the number of summations is \(1\). This exactly fits our intuition that errors at the extremal points of the spanned box have the strongest impact on the outcome. However, even in this case we can expect some convergence against a normal distribution. As the number of features increases we expect \(X_n = \Sh_\epsilon(e) \sim \Normal\left(0,\frac{\sigma^2}{n}\right)\).

\section{Extending SHAP to PQCs}

\subsection{Mathematical Results}

We are going to employ the result from Theorem \ref{thm:Output_And_Learning_Behavior} to find a (faster) extension of BS for PQCs. First we recall some definitions.

\begin{mathDef}[Multivariate Polynomials]
Let \(V\) be a \(K\) vector-space, where \(K\) is a field. We define a polynomial \(p\) as an element of \(\bigoplus_{i\geq 0} S^i(V)\) where \(S^i(V)\) denotes the space of symmetric multivariate forms. Elements of \(S^i(V)\) are also called monomials of order \(i\). Each monomial of order \(i\) can be represented by a totally symmetric tensor \(\Lambda_i\). Totally symmetric tensors allow for a rank-\(1\) decomposition as follows
\begin{equation}
    \label{eq:general_rank_one_decomposition}
    \Lambda_i = \sum_{j=1}^p \lambda_j v_j^{\otimes i}; \alpha_j\in K, v_j\in V.
\end{equation}
\end{mathDef}

This form is easily stored on a computer. To show the relationship between rank one compositions and Fourier Series note the simple Taylor expansion.

\begin{equation}
\label{eq:fourier_rank_one_decomposition}
    \exp(-i\langle \omega,\varphi\rangle) = \sum\limits_{k\geq 0} \frac {(-1)^k i^k \langle \omega,\varphi\rangle^k}{k!} \approx \sum\limits_{k=0}^K \frac {(-1)^k i^k \langle \omega,\varphi\rangle^k}{k!}
\end{equation}

Expanding all terms in a Fourier Series shows that the wave vectors \(\omega_i^{\otimes k}\) represent a natural choice for the rank-\(1\) decomposition. Note that Taylor expansions are not the only choice here. Other polynomial approximation algorithms can be used as well. Chebyshev approximations will reduce the required polynomial order by about half.

\begin{mathProp}[PolynomialSHAP]
    \label{prop:BaselinePolynomialSHAP}
    Let \(\Lambda\) be a totally symmetric tensor of order \(r\) with a rank-one decomposition as in \ref{eq:general_rank_one_decomposition}. Let \(x, b\in \R^m\) be  input and baseline, respectively. Then the BS values are given by:
    \begin{equation}
        \begin{split}
            \Sh(e) = 2\sum\limits_{\substack{j + m + k = r\\0\leq j,m,k\leq r\\ m\text{ odd}; k\text{ even}}} &\binom{r}{j;m;k}\sum\limits_{\substack{\gamma \in \N^n\\|\gamma|=k}} \frac 1 {l(\gamma)+1}\binom{k}{\gamma}\\ &\sum\limits_{i=1}^p \left\langle v_i, M\right\rangle^{j}\left\langle v_i, \frac{\Delta_e}2\right\rangle^m \\
            &\lambda_i \prod\limits_{h=1}^n (v_{ih}\alpha_h)^{\gamma_h}.
        \end{split}
        \label{eq:BaselinePolynomialSHAP}
    \end{equation}
    Here the following definitions are made:
    \begin{enumerate}
        \item \(\N\) denotes the natural numbers with \(0\);
        \item \(M = (x+b)/2\); \(\Delta = x-b\); \(\Delta_e\) is the vector \(\Delta_{i}=0\) for \(i\neq e\) and \(\Delta_e\) otherwise; \(\alpha_h = ((x+b)/2)_h\);
        \item \(l(\gamma)\) denotes the number of odd indices in \(\gamma\).
    \end{enumerate}
\end{mathProp}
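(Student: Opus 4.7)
The plan is to reduce the claim to a single rank-one summand by linearity and then to perform a clean change of variables followed by a combinatorial identity. Since $\Sh_f$ is linear in $f$ by \eqref{eq:definition_of_baseline_shap_values} and both sides of \eqref{eq:BaselinePolynomialSHAP} are linear in $\Lambda$, it suffices to treat a single rank-one term $f(\varphi) = \lambda \langle v, \varphi\rangle^r$ and afterwards sum over $i=1,\ldots,p$. I would set $M = (x+b)/2$ and $\Delta = x-b$, so that $x_i = M_i + \Delta_i/2$ and $b_i = M_i - \Delta_i/2$. Introducing $\sigma_i^S = +1$ if $i\in S$ and $-1$ otherwise, one obtains
\begin{equation*}
\langle v, g_S\rangle = \langle v, M\rangle + \tfrac{1}{2}\textstyle\sum_i \sigma_i^S v_i \Delta_i,
\end{equation*}
so that $\langle v, g_{S\cup\{e\}}\rangle = A + B$ and $\langle v, g_S\rangle = A - B$, with $B = \tfrac{1}{2}v_e\Delta_e$ fixed and $A = \langle v, M\rangle + \tfrac{1}{2}\sum_{h\neq e}\sigma_h^S v_h\Delta_h$ depending on $S$ only through the signs $\sigma_h^S$ for $h \neq e$.

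Next, the difference becomes $f(g_{S\cup\{e\}}) - f(g_S) = \lambda\bigl[(A+B)^r - (A-B)^r\bigr] = 2\lambda\sum_{m\text{ odd}}\binom{r}{m}A^{r-m}B^m$. A further binomial expansion $A^{r-m} = \sum_{j+k=r-m}\binom{r-m}{j}\langle v,M\rangle^j T^k$ with $T = \tfrac{1}{2}\sum_{h\neq e}\sigma_h^S v_h\Delta_h$, followed by the multinomial expansion $T^k = \sum_{|\gamma|=k,\gamma_e=0}\binom{k}{\gamma}\prod_{h\neq e}(\tfrac{1}{2}\sigma_h^S v_h\Delta_h)^{\gamma_h}$, produces exactly the trinomial coefficients $\binom{r}{j;m;k}$, the factors $\langle v,M\rangle^j$ and $\langle v,\Delta_e/2\rangle^m$, the term $\binom{k}{\gamma}\prod_h(v_h\Delta_h/2)^{\gamma_h}$, and a residual sign product $\prod_{h\neq e}(\sigma_h^S)^{\gamma_h}$ that still depends on $S$.

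The main obstacle is evaluating the SHAP-weighted average of this sign product. Using the permutation formulation in the second line of \eqref{eq:definition_of_baseline_shap_values}, I reduce to computing $\mathbb{E}_P\bigl[\prod_{h\neq e}(\sigma_h^{P_e})^{\gamma_h}\bigr]$ under a uniform random permutation $P$ of $F$. Since $(\sigma_h^S)^{\gamma_h}$ depends on $\gamma_h$ only through its parity, the product collapses to $\prod_{h\in I}\sigma_h^{P_e}$ where $I = \{h\neq e : \gamma_h \text{ odd}\}$ and $|I| = l(\gamma)$. The key combinatorial observation is that the relative position of $e$ among the elements of $I\cup\{e\}$ in the random permutation $P$ is uniform on $\{1,\ldots,l(\gamma)+1\}$: if $k'$ elements of $I$ precede $e$, then the sign product equals $(-1)^{l(\gamma)-k'}$, so the average is $\frac{1}{l(\gamma)+1}\sum_{k'=0}^{l(\gamma)}(-1)^{k'}$, which equals $\frac{1}{l(\gamma)+1}$ if $l(\gamma)$ is even and $0$ if $l(\gamma)$ is odd. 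Because $l(\gamma)\equiv |\gamma|=k\pmod 2$, the vanishing for odd $l(\gamma)$ is precisely the restriction ``$k$ even'' in \eqref{eq:BaselinePolynomialSHAP}. Substituting this expectation and summing over the rank-one components $\lambda_i v_i^{\otimes r}$ yields the claimed formula.
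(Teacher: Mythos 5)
Your proof is correct, and while it shares the paper's overall skeleton, it settles the central combinatorial step by a genuinely different and considerably shorter argument. Both proofs begin the same way: write \((g_S)_i = M_i + \sigma_i^S\Delta_i/2\), observe that the marginal difference becomes \(\lambda[(A+B)^r-(A-B)^r]\), and expand to isolate the trinomial coefficients, the powers \(\langle v,M\rangle^j\langle v,\Delta_e/2\rangle^m\) with \(m\) odd, and a residual SHAP-weighted average of a sign product over the odd-exponent indices. The paper evaluates that average in the subset picture: it counts the subsets realizing each signature \((o,\overline{o})\) via \(\binom{2l}{o}\binom{n-2l}{s-o}\), and then computes \(\sum_{s} s\cdots(s-o+1)\cdot(n-s)\cdots(n-(2l-o)+1-s)\) by a discrete summation-by-parts calculus (the ``derived sequences''), arriving at \(\frac{(-1)^{2l-o}}{2l+1}\) per signature and \(\frac{1}{2l+1}\) after summing over \(o\). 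You instead pass to the permutation form of the Shapley value and note that the rank of \(e\) among \(I\cup\{e\}\) (where \(I\) is the set of odd-exponent indices, \(|I|=l(\gamma)\)) is uniform on \(\{1,\ldots,l(\gamma)+1\}\), so the expected sign product is \(\frac{1}{l(\gamma)+1}\sum_{k'=0}^{l(\gamma)}(-1)^{k'}\), which equals \(\frac{1}{l(\gamma)+1}\) for even \(l(\gamma)\) and \(0\) for odd \(l(\gamma)\) in one line; the parity identity \(l(\gamma)\equiv k \pmod 2\) then delivers the restriction to even \(k\) for free, which the paper argues separately. Your route buys brevity and transparency by exploiting the permutation symmetry specific to the Shapley kernel; the paper's finite-difference machinery is much heavier but does not rely on that symmetry (it would adapt to other subset-weighting kernels) and produces the finer per-signature values. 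One point to flag: your expansion yields \(\prod_h(v_h\Delta_h/2)^{\gamma_h}\), i.e. \(\alpha_h = ((x-b)/2)_h\). That agrees with the paper's own proof, where \(\alpha\) is defined through \(\Delta\), so the definition \(\alpha_h = ((x+b)/2)_h\) in the statement of Proposition \ref{prop:BaselinePolynomialSHAP} is evidently a typo; your derivation confirms rather than contradicts the intended result.
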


The proof of this statement can be found in the supplement. The formula above extends the concept of Linear SHAP to Multivariate polynomial functions and runs in \(\mathcal{O}(m^r)\), where \(m\) denotes the number of features and \(r\) the degree of the polynomial.

\subsection{Algorithmic Description and Runtime}

Algorithm \ref{alg:qSHAP} describes the qSHAP routine.

\begin{algorithm}[H]
    \caption{Compute SHAP Values for PQCs \(f\)}
    \label{alg:qSHAP}
    \begin{algorithmic}[1]
        \Require \(f\) function describing the PQC, \(n\) number of features, \(N_i\) number of times feature \(i\in\{1,\ldots, n\}\) is encoded. \(S\) set of samples in data-space, \(k\) order of polynomial approximation.
        \State Set of admissible wave-vectors \(\Omega=\prod\limits_{i=1}^n [-N_i, N_i]_{\Z}\).
        \For{\(\omega\in \Omega\)}
            \State Compute Fourier Coefficients \(a_\omega, b_\omega\) with sample points \(S\)
            \State Using taylors theorem, compute polynomial extension up to order \(k\).
        \EndFor
        \State Set \(\Sh(e)\gets 0\) for all features \(1\leq e \leq n\).
        \For{\(\omega \in \Omega\)}
            \For{(k=1) to \(K\):}
                \State Use PolynomialSHAP to compute contributions: \(\Sh_{\omega, k}(e)\).
                \State \(\Sh(e)\gets \Sh(e) + \Sh_{\omega, k}(e)\).
            \EndFor
        \EndFor
    \end{algorithmic}
\end{algorithm}

Essentially the algorithm has three steps:
\begin{enumerate}
    \item Evaluate the PQC on random sample points.
    \item Compute the Fourier decomposition. And form a multivariate polynomial approximation in rank-one form.
    \item Use the PolynomialSHAP Algorithm to compute the contributions from each term.
\end{enumerate}

\subsection{Open Issues and Further Direction}

Assuming access to quantum hardware the limiting issue for this approach is to obtain a reasonable rank-\(1\) approximation. Currently, having access to the full set of Fourier Modes is required.

The algorithm does not scale exponentially in the number of qubits used for the PQC, but rather in the number of features present in the algorithm. At present no efficient algorithm for Fast Fourier transformation is known to the authors that can take advantage of the way PQCs activate their Fourier terms. The Fourier terms seem to be concentrated close to the border of \(\Omega\).

Subsequent steps of the algorithm run at most in polynomial time. 

\section{Experiments}

\begin{figure}
    \includegraphics[width=\linewidth]{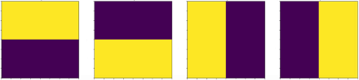}
    \caption{For our examples we use the bars and stripes data set. All possibilities of two by two pixel images are shown in this figure. The two images on the left are examples of stripes and the two images on the right are examples of bars. Our xAI models are applied to classifiers trained to distinguish bars from stripes.}
    \label{fig:BarsStripes}
\end{figure}

Our experiments are conducted with the 2 by 2 \emph{Bars and Stripes} problem. All possible images are shown in Figure~\ref{fig:BarsStripes}. The classifiers were trained to distinguish between the images of bars and images of stripes. The three classifiers each used a different parameterized quantum circuit shown in figures \ref{fig:single_qubit}, \ref{fig:two_qubits} and \ref{fig:four_qubits}. We are going to compare the KernelSHAP implementation of BS, IG, and our approximate qSHAP algorithm both with noise and without noise on these trained PQC classifiers.

For all xAI algorithms and backends we use the same model with the same parameters. That means that the training for each model was performed on classical hardware. The simulations in this case had no noise. Then we compared the results of the different xAI methods across a range of different hardware: classical simulation with no noise, classical simulation with shot-noise, classical simulation with shot-noise and a noise model and execution on a real QPU. To reach optimal model performance in each scenario it is generally recommended to perform the training on the same hardware, on which the model is used later on. However, in this case each model would have different model parameters and the results of the xAI methods would be harder to compare. For this reason we deployed the classically trained model across all hardware options.

Other techniques used for noise reduction on NISQ machines are mentioned in \cite{parallelAcceleration}. We did not train the models for each of the noise models further. Instead we can use the different performance of the explainers as a check of robustness against noise.

\subsection{Single-qubit classifier}
\begin{figure}
    \includegraphics[width=\linewidth]{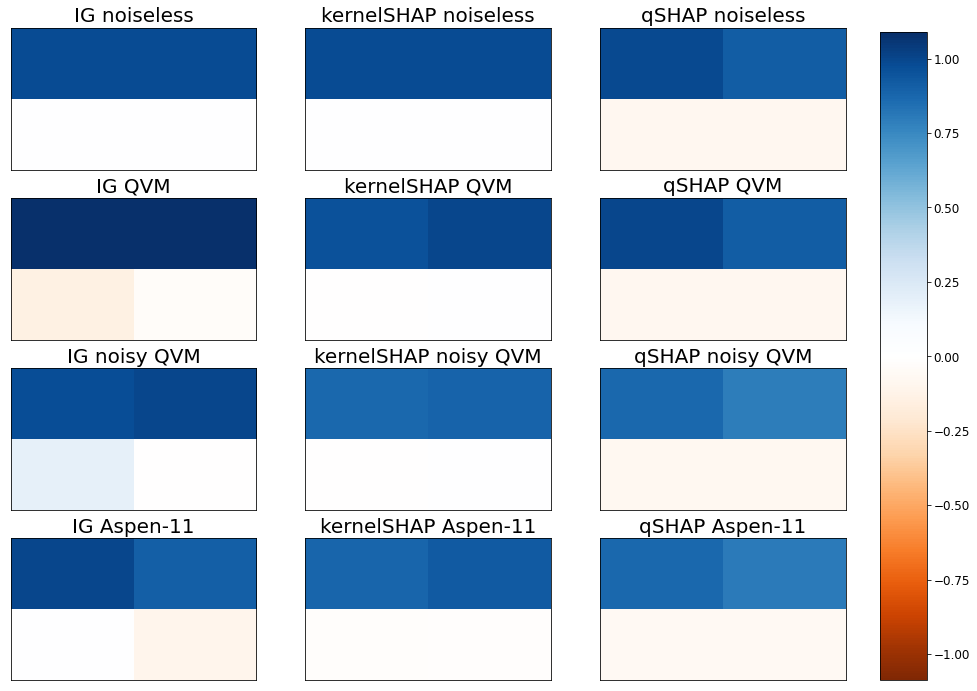}
    \caption{A comparison of the different xAI methods applied to the simple \emph{single-qubit classifier}. The circuit for the classifier is given in Figure \ref{fig:single_qubit}. The higher the value for a pixel the more influence it has on the classification. On the horizontal axis the different explainability methods are plotted (Integrated Gradients, KernelSHAP, and qSHAP). Vertically we run the model on different backends. The topmost row uses a state-vector simulator (no noise except for shot noise). The two following rows use different settings for the Rigetti simulator. One with a generic noise model (second row), and one with the noise model of the device (third row). The bottom row shows the results from a run on Rigetti's Aspen-11 quantum computer.}
    \label{fig:reduced}
\end{figure}

The single qubit classifier uses the same qubit to upload the top two pixels of the 2 by 2 Bars and Stripes images. The circuit for this classifier is given in Figure~\ref{fig:single_qubit} and allows it to effectively learn an XOR operation of the two inputs. These inputs are the two top pixels of a bars and stripes image. If they are the same, the image shows stripes and if they are different, the image shows bars.

Looking at the xAI models in the noiseless case, we see that the two upper pixel have an effect towards the model prediction. The bottom pixel roughly have a predictive value of zero. This is the expected behavior as only the upper pixels are used for the model.

With the inclusion of the first noise model this effect does not seem to change with KernelSHAP and our qSHAP method. IG, on the other hand, already seems to struggle due to the noise of this model. With increasing noise this discrepancy becomes more and more pronounced. The noise model on the real hardware seems to be the most severe, causing a lot of artifacts for IG and to a lesser extent for KernelSHAP. The qSHAP method seems to be the most robust against this noise. The noise even leads to negative contributions for a pixel. This is most pronounced in the bottom left pixel for IG, yet this pixel is not used in the classifier.

\subsection{Two-qubit classifier}
\begin{figure}
    \includegraphics[width=\linewidth]{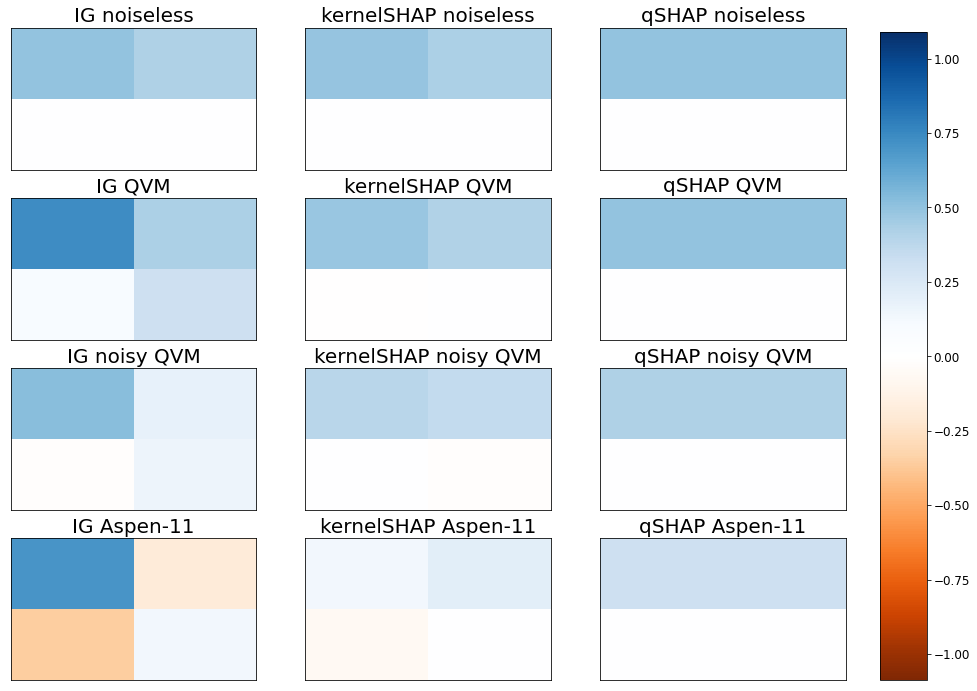}
    \caption{A comparison of the different xAI methods applied to the \emph{two-qubit classifier}. The circuit for the classifier is given in Figure \ref{fig:two_qubits}. The two upper pixels are used in the classifier. For more information on the different rows and columns of this Figure refer to the description in ~\ref{fig:single_qubit}. Any explainability attributions in the lower two pixels are likely due to noise effects (e.g. shot noise).}
    \label{fig:topRow}
\end{figure}
The two-qubit model is slightly more complicated compared to the one-qubit model. It uses a circuit with two qubits, where each qubit is initialized with a rotation dependent on one of the upper two pixels of the bars and stripes image. Again the bottom two pixel are not used in the classifier.

The different xAI models fare similarly for this classifier. This can potentially be explained by the fact that noise does not have as large an influence as for the one-qubit classifier.

\subsection{Four-qubit classifier}
\begin{figure}
    \includegraphics[width=\linewidth]{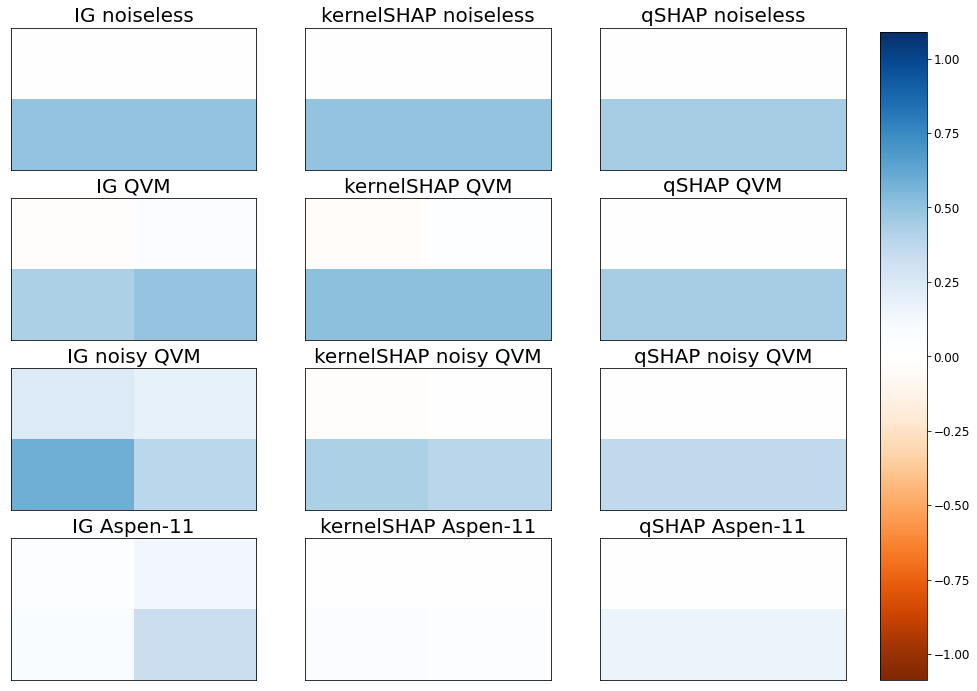}
    \caption{A comparison of the different xAI methods applied to the \emph{four-qubit classifier}. The circuit for the classifier is given in Figure \ref{fig:four_qubits}. All pixel are used as input to the classifier and only the measurement result of the last qubit is used for the classification. For more information on the different rows and columns of this Figure refer to the description in ~\ref{fig:single_qubit}.}
    \label{fig:full}
\end{figure}
Each pixel of a bars and stripes image is encoded in a different qubit in the four-qubit classifier. After two cycles of parameterized rotations and entanglement operations the fourth qubit is then measured to obtain the classification.

This is the first classifier where all qubits could affect the classification. This means that we cannot interpret a nonzero influence of all pixels as instabilities of the xAI models due to e.g. noise. Nevertheless, in the noiseless case, all xAI models show that only the lower two qubits have an effect on the classification.

\subsection{Classification Performance vs. Explainability}

In xAI research one is often faced between achieving a high classification performance or achieving fast and "good" explanations. To a limited degree we see this issue appear here as well. The four-qubit model has the highest number of trainable parameters, meaning that in theory, the space of functions it can encode is much larger. The classification problem presented here is very simple and can be solved in various ways by only taking a subset of the features. In our four-qubit circuit, the qubits that are furthest away from the measurement qubit experience a lesser weighing than the other qubits. This behaviour is similar to classical neural networks where it is harder for the training algorithms to correctly adjust the weights. Furthermore, the quantum nature of PQCs introduces noise, whose impact increases with the size and complexity of the circuit.

\subsection{Runtime for qSHAP}

When discussing the runtime of this algorithm we should discuss three parts here:
\begin{enumerate}
    \item The Sampling Step.
    \item The Fourier Step.
    \item The Polynomial Step.
\end{enumerate}

During the \textit{Sampling Step} the quantum circuit is sampled \(S\) times. The amount of samples required depends on the dimension of the feature space, the circuit noise, etc. This step is linear in the number of samples used. The necessary number of samples should be drawn from convergence criteria from the stochastic integration methods used. However, in practice the size of the circuit is also an issue. There are many steps involved in compiling the logical circuit to the circuit that is evidently used by the hardware. A more complex circuit will introduce a higher overhead, resulting in less evaluations per minute.

Assuming a probabilistic rate of convergence of \(\mathcal{O}((n+1)^{-\frac 12})\) the number of circuit evaluations should scale like \(\mathcal{O}(\epsilon^{-1} (n+1)^2)\).

In our runs we used between 250 and 300 circuit evaluations and the below table lists the total time it took to gather the data:

\begin{table}[H]
\centering
\resizebox{\columnwidth}{!}{
    \begin{tabular}{|c|c|r|}
    \hline
         Algorithm & Samples & Time [s] \\
         \hline
         Single Qubit & 250 & 399.526641 \\
         Two Qubit & 250 & 497.247195 \\
         Four Qubit & 300 & 682.531422 \\
         \hline
    \end{tabular}}
        \caption{Table comparing the runtimes recorded for our experimentes of qSHAP on the Rigetti-M-11 hardware.}
        \label{tab:qSHAP_runtimes}
\end{table}

Starting with the \textit{Fourier Step} we do not use the QPU anylonger. This step is used to estimate which Fourier modes are activated. Importantly, the search space scales exponentially with the number of features and not the number of qubits (as the phase space does). Thus the limiting factor is the number of features in the model. Let \(N_\omega\) denote the number of fourier modes found.

Lastly, the \textit{Polynomial Step} runs in polynomial time in the number of features. Importantly, this step is also depending on \(N_\omega\).

\subsection{Runtime for IG}

IG was executed with 20 nodes on the line from base value to the input value. For each point we had to evaluate the circuit eight times to approximate the partial derivatives. The runtimes are shown below.

\begin{table}[H]
\centering
\resizebox{\columnwidth}{!}{
    \begin{tabular}{|c|c|r|}
    \hline
         Algorithm & Samples & Time [sec] \\
         \hline
         Single Qubit & 20 & 240.778704 \\
         Two Qubit & 20 & 261.569914 \\
         Four Qubit & 20 & 343.405340 \\
         \hline
    \end{tabular}}
        \caption{Table comparing the runtimes recorded for our experimentes of qSHAP on the Rigetti-M-11 hardware.}
        \label{tab:IG_runtimes}
\end{table}

\section{Conclusion}
In this paper we have applied two black-box explainers to nascent quantum machine learning models. The drawback of these standard black-box xAI models is their exponential scaling and their behaviour under the noise of NISQ devices. As a first step in resolving these issues we propose qSHAP, which is an xAI method specifically designed for PQCs.

Please note that this algorithm does not resolve issues related to the exponential scaling. However, the algorithm does not scale exponentially with respect to the number of qubits, but with respect to the number of features.

For qSHAP, the adverse effect of noise is considerably reduced compared to the other explainers.

While there is a large overlap between xAI for PQCs and the general simulation of circuits, we expect some interesting new approaches to arrive from the recent breakthroughs in simulating circuits used to show quantum advantage~\cite{quantumAdvantage}.

Future work can also be focused on finding efficient algorithms for approximating PQCs with multivariate polynomials in rank-one decomposition. Together with the PolynomialSHAP subroutine, this would result in an efficient algorithm for computing SHAP values for PQCs.

\section{Acknowledgements}
This work was supported by the German Federal Ministry for Economic Affairs and Climate Action through project PlanQK (01MK20005D).

Special thanks goes to Till Duesberg who provided welcome vetting of many aspects of the initial drafts and Dr. Cristian Grozea who provided constructive criticism throughout. We also want to thank the d-fine internal reviewers Dr. Daniel Ohl de Mello and Dr. Ulf Menzler.
\bibliography{references}

\begin{thebibliography}{24}%
\makeatletter
\providecommand \@ifxundefined [1]{%
 \@ifx{#1\undefined}
}%
\providecommand \@ifnum [1]{%
 \ifnum #1\expandafter \@firstoftwo
 \else \expandafter \@secondoftwo
 \fi
}%
\providecommand \@ifx [1]{%
 \ifx #1\expandafter \@firstoftwo
 \else \expandafter \@secondoftwo
 \fi
}%
\providecommand \natexlab [1]{#1}%
\providecommand \enquote  [1]{``#1''}%
\providecommand \bibnamefont  [1]{#1}%
\providecommand \bibfnamefont [1]{#1}%
\providecommand \citenamefont [1]{#1}%
\providecommand \href@noop [0]{\@secondoftwo}%
\providecommand \href [0]{\begingroup \@sanitize@url \@href}%
\providecommand \@href[1]{\@@startlink{#1}\@@href}%
\providecommand \@@href[1]{\endgroup#1\@@endlink}%
\providecommand \@sanitize@url [0]{\catcode `\\12\catcode `\$12\catcode
  `\&12\catcode `\#12\catcode `\^12\catcode `\_12\catcode `\%12\relax}%
\providecommand \@@startlink[1]{}%
\providecommand \@@endlink[0]{}%
\providecommand \url  [0]{\begingroup\@sanitize@url \@url }%
\providecommand \@url [1]{\endgroup\@href {#1}{\urlprefix }}%
\providecommand \urlprefix  [0]{URL }%
\providecommand \Eprint [0]{\href }%
\providecommand \doibase [0]{https://doi.org/}%
\providecommand \selectlanguage [0]{\@gobble}%
\providecommand \bibinfo  [0]{\@secondoftwo}%
\providecommand \bibfield  [0]{\@secondoftwo}%
\providecommand \translation [1]{[#1]}%
\providecommand \BibitemOpen [0]{}%
\providecommand \bibitemStop [0]{}%
\providecommand \bibitemNoStop [0]{.\EOS\space}%
\providecommand \EOS [0]{\spacefactor3000\relax}%
\providecommand \BibitemShut  [1]{\csname bibitem#1\endcsname}%
\let\auto@bib@innerbib\@empty
\bibitem [{\citenamefont {{Council of European Union}}(2021)}]{AIAct}%
  \BibitemOpen
  \bibfield  {author} {\bibinfo {author} {\bibnamefont {{Council of European
  Union}}},\ }\href
  {{https://eur-lex.europa.eu/legal-content/EN/TXT/?uri=CELEX%3A52021PC0206}}
  {\bibinfo {title} {Proposal - laying down harmonised rules on artificial
  intelligence (artificial intelligence act) and amending certain union
  legislative acts}} (\bibinfo {year} {2021})\BibitemShut {NoStop}%
\bibitem [{\citenamefont {Benedetti}\ \emph {et~al.}(2019)\citenamefont
  {Benedetti}, \citenamefont {Lloyd}, \citenamefont {Sack},\ and\ \citenamefont
  {Fiorentini}}]{PQCReview}%
  \BibitemOpen
  \bibfield  {author} {\bibinfo {author} {\bibfnamefont {M.}~\bibnamefont
  {Benedetti}}, \bibinfo {author} {\bibfnamefont {E.}~\bibnamefont {Lloyd}},
  \bibinfo {author} {\bibfnamefont {S.}~\bibnamefont {Sack}},\ and\ \bibinfo
  {author} {\bibfnamefont {M.}~\bibnamefont {Fiorentini}},\ }\bibfield  {title}
  {\bibinfo {title} {Parameterized quantum circuits as machine learning
  models},\ }\href {https://doi.org/10.1088/2058-9565/ab4eb5} {\bibfield
  {journal} {\bibinfo  {journal} {Quantum Science and Technology}\ }\textbf
  {\bibinfo {volume} {4}},\ \bibinfo {pages} {043001} (\bibinfo {year}
  {2019})}\BibitemShut {NoStop}%
\bibitem [{\citenamefont {Farhi}\ and\ \citenamefont
  {Neven}(2018)}]{farhi2018classification}%
  \BibitemOpen
  \bibfield  {author} {\bibinfo {author} {\bibfnamefont {E.}~\bibnamefont
  {Farhi}}\ and\ \bibinfo {author} {\bibfnamefont {H.}~\bibnamefont {Neven}},\
  }\bibfield  {title} {\bibinfo {title} {Classification with quantum neural
  networks on near term processors},\ }\bibfield  {journal} {\bibinfo
  {journal} {arXiv preprint}\ }\href
  {https://doi.org/10.48550/arXiv.1802.06002} {10.48550/arXiv.1802.06002}
  (\bibinfo {year} {2018})\BibitemShut {NoStop}%
\bibitem [{\citenamefont {Schuld}\ \emph {et~al.}(2021)\citenamefont {Schuld},
  \citenamefont {Sweke},\ and\ \citenamefont {Meyer}}]{Schuld2021}%
  \BibitemOpen
  \bibfield  {author} {\bibinfo {author} {\bibfnamefont {M.}~\bibnamefont
  {Schuld}}, \bibinfo {author} {\bibfnamefont {R.}~\bibnamefont {Sweke}},\ and\
  \bibinfo {author} {\bibfnamefont {J.~J.}\ \bibnamefont {Meyer}},\ }\bibfield
  {title} {\bibinfo {title} {Effect of data encoding on the expressive power of
  variational quantum-machine-learning models},\ }\href
  {https://doi.org/10.1103/PhysRevA.103.032430} {\bibfield  {journal} {\bibinfo
   {journal} {Phys. Rev. A}\ }\textbf {\bibinfo {volume} {103}},\ \bibinfo
  {pages} {032430} (\bibinfo {year} {2021})}\BibitemShut {NoStop}%
\bibitem [{\citenamefont {P{\'{e}}rez-Salinas}\ \emph
  {et~al.}(2020)\citenamefont {P{\'{e}}rez-Salinas}, \citenamefont
  {Cervera-Lierta}, \citenamefont {Gil-Fuster},\ and\ \citenamefont
  {Latorre}}]{reuploading}%
  \BibitemOpen
  \bibfield  {author} {\bibinfo {author} {\bibfnamefont {A.}~\bibnamefont
  {P{\'{e}}rez-Salinas}}, \bibinfo {author} {\bibfnamefont {A.}~\bibnamefont
  {Cervera-Lierta}}, \bibinfo {author} {\bibfnamefont {E.}~\bibnamefont
  {Gil-Fuster}},\ and\ \bibinfo {author} {\bibfnamefont {J.~I.}\ \bibnamefont
  {Latorre}},\ }\bibfield  {title} {\bibinfo {title} {Data re-uploading for a
  universal quantum classifier},\ }\href
  {https://doi.org/10.22331/q-2020-02-06-226} {\bibfield  {journal} {\bibinfo
  {journal} {{Quantum}}\ }\textbf {\bibinfo {volume} {4}},\ \bibinfo {pages}
  {226} (\bibinfo {year} {2020})}\BibitemShut {NoStop}%
\bibitem [{\citenamefont {{IBM}}(2016)}]{IBMQExperience}%
  \BibitemOpen
  \bibfield  {author} {\bibinfo {author} {\bibnamefont {{IBM}}},\ }\href
  {{https://quantum-computing.ibm.com/}} {\bibinfo {title} {Ibm quantum
  experience}} (\bibinfo {year} {2016})\BibitemShut {NoStop}%
\bibitem [{\citenamefont {{Amazon}}(2020)}]{AWSBracket}%
  \BibitemOpen
  \bibfield  {author} {\bibinfo {author} {\bibnamefont {{Amazon}}},\ }\href
  {{https://aws.amazon.com/braket/}} {\bibinfo {title} {Aws bracket}} (\bibinfo
  {year} {2020})\BibitemShut {NoStop}%
\bibitem [{\citenamefont {{Microsoft}}(2021)}]{Azure}%
  \BibitemOpen
  \bibfield  {author} {\bibinfo {author} {\bibnamefont {{Microsoft}}},\ }\href
  {{https://azure.microsoft.com/en-us/services/quantum/}} {\bibinfo {title}
  {Azure quantum}} (\bibinfo {year} {2021})\BibitemShut {NoStop}%
\bibitem [{\citenamefont {Devitt}(2016)}]{Devitt2016}%
  \BibitemOpen
  \bibfield  {author} {\bibinfo {author} {\bibfnamefont {S.~J.}\ \bibnamefont
  {Devitt}},\ }\bibfield  {title} {\bibinfo {title} {Performing quantum
  computing experiments in the cloud},\ }\href
  {https://doi.org/10.1103/PhysRevA.94.032329} {\bibfield  {journal} {\bibinfo
  {journal} {Phys. Rev. A}\ }\textbf {\bibinfo {volume} {94}},\ \bibinfo
  {pages} {032329} (\bibinfo {year} {2016})}\BibitemShut {NoStop}%
\bibitem [{\citenamefont {Zeng}\ \emph {et~al.}(2017)\citenamefont {Zeng},
  \citenamefont {Johnson}, \citenamefont {Smith}, \citenamefont {Rubin},
  \citenamefont {Reagor}, \citenamefont {Ryan},\ and\ \citenamefont
  {Rigetti}}]{Zeng2017}%
  \BibitemOpen
  \bibfield  {author} {\bibinfo {author} {\bibfnamefont {W.}~\bibnamefont
  {Zeng}}, \bibinfo {author} {\bibfnamefont {B.}~\bibnamefont {Johnson}},
  \bibinfo {author} {\bibfnamefont {R.}~\bibnamefont {Smith}}, \bibinfo
  {author} {\bibfnamefont {N.}~\bibnamefont {Rubin}}, \bibinfo {author}
  {\bibfnamefont {M.}~\bibnamefont {Reagor}}, \bibinfo {author} {\bibfnamefont
  {C.}~\bibnamefont {Ryan}},\ and\ \bibinfo {author} {\bibfnamefont
  {C.}~\bibnamefont {Rigetti}},\ }\bibfield  {title} {\bibinfo {title} {First
  quantum computers need smart software},\ }\href
  {https://doi.org/10.1038/549149a} {\bibfield  {journal} {\bibinfo  {journal}
  {Nature}\ }\textbf {\bibinfo {volume} {549}},\ \bibinfo {pages} {149}
  (\bibinfo {year} {2017})}\BibitemShut {NoStop}%
\bibitem [{\citenamefont {Farhi}\ \emph {et~al.}(2014)\citenamefont {Farhi},
  \citenamefont {Goldstone},\ and\ \citenamefont {Gutmann}}]{farhi2014quantum}%
  \BibitemOpen
  \bibfield  {author} {\bibinfo {author} {\bibfnamefont {E.}~\bibnamefont
  {Farhi}}, \bibinfo {author} {\bibfnamefont {J.}~\bibnamefont {Goldstone}},\
  and\ \bibinfo {author} {\bibfnamefont {S.}~\bibnamefont {Gutmann}},\
  }\bibfield  {title} {\bibinfo {title} {A quantum approximate optimization
  algorithm},\ }\bibfield  {journal} {\bibinfo  {journal} {arXiv preprint}\
  }\href {https://doi.org/10.48550/arXiv.1411.4028} {10.48550/arXiv.1411.4028}
  (\bibinfo {year} {2014})\BibitemShut {NoStop}%
\bibitem [{\citenamefont {Peruzzo}\ \emph {et~al.}(2014)\citenamefont
  {Peruzzo}, \citenamefont {McClean}, \citenamefont {Shadbolt}, \citenamefont
  {Yung}, \citenamefont {Zhou}, \citenamefont {Love}, \citenamefont
  {Aspuru-Guzik},\ and\ \citenamefont {O'Brien}}]{Peruzzo2014}%
  \BibitemOpen
  \bibfield  {author} {\bibinfo {author} {\bibfnamefont {A.}~\bibnamefont
  {Peruzzo}}, \bibinfo {author} {\bibfnamefont {J.}~\bibnamefont {McClean}},
  \bibinfo {author} {\bibfnamefont {P.}~\bibnamefont {Shadbolt}}, \bibinfo
  {author} {\bibfnamefont {M.-H.}\ \bibnamefont {Yung}}, \bibinfo {author}
  {\bibfnamefont {X.-Q.}\ \bibnamefont {Zhou}}, \bibinfo {author}
  {\bibfnamefont {P.~J.}\ \bibnamefont {Love}}, \bibinfo {author}
  {\bibfnamefont {A.}~\bibnamefont {Aspuru-Guzik}},\ and\ \bibinfo {author}
  {\bibfnamefont {J.~L.}\ \bibnamefont {O'Brien}},\ }\bibfield  {title}
  {\bibinfo {title} {A variational eigenvalue solver on a photonic quantum
  processor},\ }\href {https://doi.org/10.1038/ncomms5213} {\bibfield
  {journal} {\bibinfo  {journal} {Nature Communications}\ }\textbf {\bibinfo
  {volume} {5}},\ \bibinfo {pages} {4213} (\bibinfo {year} {2014})}\BibitemShut
  {NoStop}%
\bibitem [{\citenamefont {Huang}\ \emph {et~al.}(2022)\citenamefont {Huang},
  \citenamefont {Broughton}, \citenamefont {Cotler}, \citenamefont {Chen},
  \citenamefont {Li}, \citenamefont {Mohseni}, \citenamefont {Neven},
  \citenamefont {Babbush}, \citenamefont {Kueng}, \citenamefont {Preskill},\
  and\ \citenamefont {McClean}}]{quantumAdvantage}%
  \BibitemOpen
  \bibfield  {author} {\bibinfo {author} {\bibfnamefont {H.-Y.}\ \bibnamefont
  {Huang}}, \bibinfo {author} {\bibfnamefont {M.}~\bibnamefont {Broughton}},
  \bibinfo {author} {\bibfnamefont {J.}~\bibnamefont {Cotler}}, \bibinfo
  {author} {\bibfnamefont {S.}~\bibnamefont {Chen}}, \bibinfo {author}
  {\bibfnamefont {J.}~\bibnamefont {Li}}, \bibinfo {author} {\bibfnamefont
  {M.}~\bibnamefont {Mohseni}}, \bibinfo {author} {\bibfnamefont
  {H.}~\bibnamefont {Neven}}, \bibinfo {author} {\bibfnamefont
  {R.}~\bibnamefont {Babbush}}, \bibinfo {author} {\bibfnamefont
  {R.}~\bibnamefont {Kueng}}, \bibinfo {author} {\bibfnamefont
  {J.}~\bibnamefont {Preskill}},\ and\ \bibinfo {author} {\bibfnamefont
  {J.~R.}\ \bibnamefont {McClean}},\ }\bibfield  {title} {\bibinfo {title}
  {Quantum advantage in learning from experiments},\ }\href
  {https://doi.org/10.1126/science.abn7293} {\bibfield  {journal} {\bibinfo
  {journal} {Science}\ }\textbf {\bibinfo {volume} {376}},\ \bibinfo {pages}
  {1182} (\bibinfo {year} {2022})},\ \Eprint
  {https://arxiv.org/abs/https://www.science.org/doi/pdf/10.1126/science.abn7293}
  {https://www.science.org/doi/pdf/10.1126/science.abn7293} \BibitemShut
  {NoStop}%
\bibitem [{\citenamefont {Herr}\ \emph {et~al.}(2021)\citenamefont {Herr},
  \citenamefont {Obert},\ and\ \citenamefont {Rosenkranz}}]{Herr_2021}%
  \BibitemOpen
  \bibfield  {author} {\bibinfo {author} {\bibfnamefont {D.}~\bibnamefont
  {Herr}}, \bibinfo {author} {\bibfnamefont {B.}~\bibnamefont {Obert}},\ and\
  \bibinfo {author} {\bibfnamefont {M.}~\bibnamefont {Rosenkranz}},\ }\bibfield
   {title} {\bibinfo {title} {Anomaly detection with variational quantum
  generative adversarial networks},\ }\href
  {https://doi.org/10.1088/2058-9565/ac0d4d} {\bibfield  {journal} {\bibinfo
  {journal} {Quantum Science and Technology}\ }\textbf {\bibinfo {volume}
  {6}},\ \bibinfo {pages} {045004} (\bibinfo {year} {2021})}\BibitemShut
  {NoStop}%
\bibitem [{\citenamefont {McClean}\ \emph {et~al.}(2018)\citenamefont
  {McClean}, \citenamefont {Boixo}, \citenamefont {Smelyanskiy}, \citenamefont
  {Babbush},\ and\ \citenamefont {Neven}}]{McClean2018}%
  \BibitemOpen
  \bibfield  {author} {\bibinfo {author} {\bibfnamefont {J.~R.}\ \bibnamefont
  {McClean}}, \bibinfo {author} {\bibfnamefont {S.}~\bibnamefont {Boixo}},
  \bibinfo {author} {\bibfnamefont {V.~N.}\ \bibnamefont {Smelyanskiy}},
  \bibinfo {author} {\bibfnamefont {R.}~\bibnamefont {Babbush}},\ and\ \bibinfo
  {author} {\bibfnamefont {H.}~\bibnamefont {Neven}},\ }\bibfield  {title}
  {\bibinfo {title} {Barren plateaus in quantum neural network training
  landscapes},\ }\href {https://doi.org/10.1038/s41467-018-07090-4} {\bibfield
  {journal} {\bibinfo  {journal} {Nature Communications}\ }\textbf {\bibinfo
  {volume} {9}},\ \bibinfo {pages} {4812} (\bibinfo {year} {2018})}\BibitemShut
  {NoStop}%
\bibitem [{\citenamefont {Sweke}\ \emph {et~al.}(2020)\citenamefont {Sweke},
  \citenamefont {Wilde}, \citenamefont {Meyer}, \citenamefont {Schuld},
  \citenamefont {Faehrmann}, \citenamefont {Meynard-Piganeau},\ and\
  \citenamefont {Eisert}}]{Sweke2020stochasticgradient}%
  \BibitemOpen
  \bibfield  {author} {\bibinfo {author} {\bibfnamefont {R.}~\bibnamefont
  {Sweke}}, \bibinfo {author} {\bibfnamefont {F.}~\bibnamefont {Wilde}},
  \bibinfo {author} {\bibfnamefont {J.}~\bibnamefont {Meyer}}, \bibinfo
  {author} {\bibfnamefont {M.}~\bibnamefont {Schuld}}, \bibinfo {author}
  {\bibfnamefont {P.~K.}\ \bibnamefont {Faehrmann}}, \bibinfo {author}
  {\bibfnamefont {B.}~\bibnamefont {Meynard-Piganeau}},\ and\ \bibinfo {author}
  {\bibfnamefont {J.}~\bibnamefont {Eisert}},\ }\bibfield  {title} {\bibinfo
  {title} {Stochastic gradient descent for hybrid quantum-classical
  optimization},\ }\href {https://doi.org/10.22331/q-2020-08-31-314} {\bibfield
   {journal} {\bibinfo  {journal} {{Quantum}}\ }\textbf {\bibinfo {volume}
  {4}},\ \bibinfo {pages} {314} (\bibinfo {year} {2020})}\BibitemShut {NoStop}%
\bibitem [{\citenamefont {Abbas}\ \emph {et~al.}(2021)\citenamefont {Abbas},
  \citenamefont {Sutter}, \citenamefont {Zoufal}, \citenamefont {Lucchi},
  \citenamefont {Figalli},\ and\ \citenamefont {Woerner}}]{Abbas2021}%
  \BibitemOpen
  \bibfield  {author} {\bibinfo {author} {\bibfnamefont {A.}~\bibnamefont
  {Abbas}}, \bibinfo {author} {\bibfnamefont {D.}~\bibnamefont {Sutter}},
  \bibinfo {author} {\bibfnamefont {C.}~\bibnamefont {Zoufal}}, \bibinfo
  {author} {\bibfnamefont {A.}~\bibnamefont {Lucchi}}, \bibinfo {author}
  {\bibfnamefont {A.}~\bibnamefont {Figalli}},\ and\ \bibinfo {author}
  {\bibfnamefont {S.}~\bibnamefont {Woerner}},\ }\bibfield  {title} {\bibinfo
  {title} {The power of quantum neural networks},\ }\href
  {https://doi.org/10.1038/s43588-021-00084-1} {\bibfield  {journal} {\bibinfo
  {journal} {Nature Computational Science}\ }\textbf {\bibinfo {volume} {1}},\
  \bibinfo {pages} {403} (\bibinfo {year} {2021})}\BibitemShut {NoStop}%
\bibitem [{\citenamefont {Arute}\ \emph {et~al.}(2019)\citenamefont {Arute},
  \citenamefont {Arya}, \citenamefont {Babbush}, \citenamefont {Bacon},
  \citenamefont {Bardin}, \citenamefont {Barends}, \citenamefont {Biswas},
  \citenamefont {Boixo}, \citenamefont {Brandao}, \citenamefont {Buell},
  \citenamefont {Burkett}, \citenamefont {Chen}, \citenamefont {Chen},
  \citenamefont {Chiaro}, \citenamefont {Collins}, \citenamefont {Courtney},
  \citenamefont {Dunsworth}, \citenamefont {Farhi}, \citenamefont {Foxen},
  \citenamefont {Fowler}, \citenamefont {Gidney}, \citenamefont {Giustina},
  \citenamefont {Graff}, \citenamefont {Guerin}, \citenamefont {Habegger},
  \citenamefont {Harrigan}, \citenamefont {Hartmann}, \citenamefont {Ho},
  \citenamefont {Hoffmann}, \citenamefont {Huang}, \citenamefont {Humble},
  \citenamefont {Isakov}, \citenamefont {Jeffrey}, \citenamefont {Jiang},
  \citenamefont {Kafri}, \citenamefont {Kechedzhi}, \citenamefont {Kelly},
  \citenamefont {Klimov}, \citenamefont {Knysh}, \citenamefont {Korotkov},
  \citenamefont {Kostritsa}, \citenamefont {Landhuis}, \citenamefont
  {Lindmark}, \citenamefont {Lucero}, \citenamefont {Lyakh}, \citenamefont
  {Mandr{\`a}}, \citenamefont {McClean}, \citenamefont {McEwen}, \citenamefont
  {Megrant}, \citenamefont {Mi}, \citenamefont {Michielsen}, \citenamefont
  {Mohseni}, \citenamefont {Mutus}, \citenamefont {Naaman}, \citenamefont
  {Neeley}, \citenamefont {Neill}, \citenamefont {Niu}, \citenamefont {Ostby},
  \citenamefont {Petukhov}, \citenamefont {Platt}, \citenamefont {Quintana},
  \citenamefont {Rieffel}, \citenamefont {Roushan}, \citenamefont {Rubin},
  \citenamefont {Sank}, \citenamefont {Satzinger}, \citenamefont {Smelyanskiy},
  \citenamefont {Sung}, \citenamefont {Trevithick}, \citenamefont
  {Vainsencher}, \citenamefont {Villalonga}, \citenamefont {White},
  \citenamefont {Yao}, \citenamefont {Yeh}, \citenamefont {Zalcman},
  \citenamefont {Neven},\ and\ \citenamefont {Martinis}}]{Arute2019}%
  \BibitemOpen
  \bibfield  {author} {\bibinfo {author} {\bibfnamefont {F.}~\bibnamefont
  {Arute}}, \bibinfo {author} {\bibfnamefont {K.}~\bibnamefont {Arya}},
  \bibinfo {author} {\bibfnamefont {R.}~\bibnamefont {Babbush}}, \bibinfo
  {author} {\bibfnamefont {D.}~\bibnamefont {Bacon}}, \bibinfo {author}
  {\bibfnamefont {J.~C.}\ \bibnamefont {Bardin}}, \bibinfo {author}
  {\bibfnamefont {R.}~\bibnamefont {Barends}}, \bibinfo {author} {\bibfnamefont
  {R.}~\bibnamefont {Biswas}}, \bibinfo {author} {\bibfnamefont
  {S.}~\bibnamefont {Boixo}}, \bibinfo {author} {\bibfnamefont {F.~G. S.~L.}\
  \bibnamefont {Brandao}}, \bibinfo {author} {\bibfnamefont {D.~A.}\
  \bibnamefont {Buell}}, \bibinfo {author} {\bibfnamefont {B.}~\bibnamefont
  {Burkett}}, \bibinfo {author} {\bibfnamefont {Y.}~\bibnamefont {Chen}},
  \bibinfo {author} {\bibfnamefont {Z.}~\bibnamefont {Chen}}, \bibinfo {author}
  {\bibfnamefont {B.}~\bibnamefont {Chiaro}}, \bibinfo {author} {\bibfnamefont
  {R.}~\bibnamefont {Collins}}, \bibinfo {author} {\bibfnamefont
  {W.}~\bibnamefont {Courtney}}, \bibinfo {author} {\bibfnamefont
  {A.}~\bibnamefont {Dunsworth}}, \bibinfo {author} {\bibfnamefont
  {E.}~\bibnamefont {Farhi}}, \bibinfo {author} {\bibfnamefont
  {B.}~\bibnamefont {Foxen}}, \bibinfo {author} {\bibfnamefont
  {A.}~\bibnamefont {Fowler}}, \bibinfo {author} {\bibfnamefont
  {C.}~\bibnamefont {Gidney}}, \bibinfo {author} {\bibfnamefont
  {M.}~\bibnamefont {Giustina}}, \bibinfo {author} {\bibfnamefont
  {R.}~\bibnamefont {Graff}}, \bibinfo {author} {\bibfnamefont
  {K.}~\bibnamefont {Guerin}}, \bibinfo {author} {\bibfnamefont
  {S.}~\bibnamefont {Habegger}}, \bibinfo {author} {\bibfnamefont {M.~P.}\
  \bibnamefont {Harrigan}}, \bibinfo {author} {\bibfnamefont {M.~J.}\
  \bibnamefont {Hartmann}}, \bibinfo {author} {\bibfnamefont {A.}~\bibnamefont
  {Ho}}, \bibinfo {author} {\bibfnamefont {M.}~\bibnamefont {Hoffmann}},
  \bibinfo {author} {\bibfnamefont {T.}~\bibnamefont {Huang}}, \bibinfo
  {author} {\bibfnamefont {T.~S.}\ \bibnamefont {Humble}}, \bibinfo {author}
  {\bibfnamefont {S.~V.}\ \bibnamefont {Isakov}}, \bibinfo {author}
  {\bibfnamefont {E.}~\bibnamefont {Jeffrey}}, \bibinfo {author} {\bibfnamefont
  {Z.}~\bibnamefont {Jiang}}, \bibinfo {author} {\bibfnamefont
  {D.}~\bibnamefont {Kafri}}, \bibinfo {author} {\bibfnamefont
  {K.}~\bibnamefont {Kechedzhi}}, \bibinfo {author} {\bibfnamefont
  {J.}~\bibnamefont {Kelly}}, \bibinfo {author} {\bibfnamefont {P.~V.}\
  \bibnamefont {Klimov}}, \bibinfo {author} {\bibfnamefont {S.}~\bibnamefont
  {Knysh}}, \bibinfo {author} {\bibfnamefont {A.}~\bibnamefont {Korotkov}},
  \bibinfo {author} {\bibfnamefont {F.}~\bibnamefont {Kostritsa}}, \bibinfo
  {author} {\bibfnamefont {D.}~\bibnamefont {Landhuis}}, \bibinfo {author}
  {\bibfnamefont {M.}~\bibnamefont {Lindmark}}, \bibinfo {author}
  {\bibfnamefont {E.}~\bibnamefont {Lucero}}, \bibinfo {author} {\bibfnamefont
  {D.}~\bibnamefont {Lyakh}}, \bibinfo {author} {\bibfnamefont
  {S.}~\bibnamefont {Mandr{\`a}}}, \bibinfo {author} {\bibfnamefont {J.~R.}\
  \bibnamefont {McClean}}, \bibinfo {author} {\bibfnamefont {M.}~\bibnamefont
  {McEwen}}, \bibinfo {author} {\bibfnamefont {A.}~\bibnamefont {Megrant}},
  \bibinfo {author} {\bibfnamefont {X.}~\bibnamefont {Mi}}, \bibinfo {author}
  {\bibfnamefont {K.}~\bibnamefont {Michielsen}}, \bibinfo {author}
  {\bibfnamefont {M.}~\bibnamefont {Mohseni}}, \bibinfo {author} {\bibfnamefont
  {J.}~\bibnamefont {Mutus}}, \bibinfo {author} {\bibfnamefont
  {O.}~\bibnamefont {Naaman}}, \bibinfo {author} {\bibfnamefont
  {M.}~\bibnamefont {Neeley}}, \bibinfo {author} {\bibfnamefont
  {C.}~\bibnamefont {Neill}}, \bibinfo {author} {\bibfnamefont {M.~Y.}\
  \bibnamefont {Niu}}, \bibinfo {author} {\bibfnamefont {E.}~\bibnamefont
  {Ostby}}, \bibinfo {author} {\bibfnamefont {A.}~\bibnamefont {Petukhov}},
  \bibinfo {author} {\bibfnamefont {J.~C.}\ \bibnamefont {Platt}}, \bibinfo
  {author} {\bibfnamefont {C.}~\bibnamefont {Quintana}}, \bibinfo {author}
  {\bibfnamefont {E.~G.}\ \bibnamefont {Rieffel}}, \bibinfo {author}
  {\bibfnamefont {P.}~\bibnamefont {Roushan}}, \bibinfo {author} {\bibfnamefont
  {N.~C.}\ \bibnamefont {Rubin}}, \bibinfo {author} {\bibfnamefont
  {D.}~\bibnamefont {Sank}}, \bibinfo {author} {\bibfnamefont {K.~J.}\
  \bibnamefont {Satzinger}}, \bibinfo {author} {\bibfnamefont {V.}~\bibnamefont
  {Smelyanskiy}}, \bibinfo {author} {\bibfnamefont {K.~J.}\ \bibnamefont
  {Sung}}, \bibinfo {author} {\bibfnamefont {M.~D.}\ \bibnamefont
  {Trevithick}}, \bibinfo {author} {\bibfnamefont {A.}~\bibnamefont
  {Vainsencher}}, \bibinfo {author} {\bibfnamefont {B.}~\bibnamefont
  {Villalonga}}, \bibinfo {author} {\bibfnamefont {T.}~\bibnamefont {White}},
  \bibinfo {author} {\bibfnamefont {Z.~J.}\ \bibnamefont {Yao}}, \bibinfo
  {author} {\bibfnamefont {P.}~\bibnamefont {Yeh}}, \bibinfo {author}
  {\bibfnamefont {A.}~\bibnamefont {Zalcman}}, \bibinfo {author} {\bibfnamefont
  {H.}~\bibnamefont {Neven}},\ and\ \bibinfo {author} {\bibfnamefont {J.~M.}\
  \bibnamefont {Martinis}},\ }\bibfield  {title} {\bibinfo {title} {Quantum
  supremacy using a programmable superconducting processor},\ }\href
  {https://doi.org/10.1038/s41586-019-1666-5} {\bibfield  {journal} {\bibinfo
  {journal} {Nature}\ }\textbf {\bibinfo {volume} {574}},\ \bibinfo {pages}
  {505} (\bibinfo {year} {2019})}\BibitemShut {NoStop}%
\bibitem [{\citenamefont {Pan}\ \emph {et~al.}(2021)\citenamefont {Pan},
  \citenamefont {Chen},\ and\ \citenamefont {Zhang}}]{pan2021solving}%
  \BibitemOpen
  \bibfield  {author} {\bibinfo {author} {\bibfnamefont {F.}~\bibnamefont
  {Pan}}, \bibinfo {author} {\bibfnamefont {K.}~\bibnamefont {Chen}},\ and\
  \bibinfo {author} {\bibfnamefont {P.}~\bibnamefont {Zhang}},\ }\bibfield
  {title} {\bibinfo {title} {Solving the sampling problem of the sycamore
  quantum supremacy circuits},\ }\bibfield  {journal} {\bibinfo  {journal}
  {arXiv preprint}\ }\href {https://doi.org/10.48550/arXiv.2111.03011}
  {10.48550/arXiv.2111.03011} (\bibinfo {year} {2021})\BibitemShut {NoStop}%
\bibitem [{\citenamefont {Aaronson}\ and\ \citenamefont
  {Gottesman}(2004)}]{PhysRevA.70.052328}%
  \BibitemOpen
  \bibfield  {author} {\bibinfo {author} {\bibfnamefont {S.}~\bibnamefont
  {Aaronson}}\ and\ \bibinfo {author} {\bibfnamefont {D.}~\bibnamefont
  {Gottesman}},\ }\bibfield  {title} {\bibinfo {title} {Improved simulation of
  stabilizer circuits},\ }\href {https://doi.org/10.1103/PhysRevA.70.052328}
  {\bibfield  {journal} {\bibinfo  {journal} {Phys. Rev. A}\ }\textbf {\bibinfo
  {volume} {70}},\ \bibinfo {pages} {052328} (\bibinfo {year}
  {2004})}\BibitemShut {NoStop}%
\bibitem [{\citenamefont {Bravyi}\ \emph {et~al.}(2019)\citenamefont {Bravyi},
  \citenamefont {Browne}, \citenamefont {Calpin}, \citenamefont {Campbell},
  \citenamefont {Gosset},\ and\ \citenamefont
  {Howard}}]{Bravyi2019simulationofquantum}%
  \BibitemOpen
  \bibfield  {author} {\bibinfo {author} {\bibfnamefont {S.}~\bibnamefont
  {Bravyi}}, \bibinfo {author} {\bibfnamefont {D.}~\bibnamefont {Browne}},
  \bibinfo {author} {\bibfnamefont {P.}~\bibnamefont {Calpin}}, \bibinfo
  {author} {\bibfnamefont {E.}~\bibnamefont {Campbell}}, \bibinfo {author}
  {\bibfnamefont {D.}~\bibnamefont {Gosset}},\ and\ \bibinfo {author}
  {\bibfnamefont {M.}~\bibnamefont {Howard}},\ }\bibfield  {title} {\bibinfo
  {title} {Simulation of quantum circuits by low-rank stabilizer
  decompositions},\ }\href {https://doi.org/10.22331/q-2019-09-02-181}
  {\bibfield  {journal} {\bibinfo  {journal} {{Quantum}}\ }\textbf {\bibinfo
  {volume} {3}},\ \bibinfo {pages} {181} (\bibinfo {year} {2019})}\BibitemShut
  {NoStop}%
\bibitem [{\citenamefont {Lundberg}\ and\ \citenamefont {Lee}(2017)}]{Shap}%
  \BibitemOpen
  \bibfield  {author} {\bibinfo {author} {\bibfnamefont {S.~M.}\ \bibnamefont
  {Lundberg}}\ and\ \bibinfo {author} {\bibfnamefont {S.-I.}\ \bibnamefont
  {Lee}},\ }\bibfield  {title} {\bibinfo {title} {A unified approach to
  interpreting model predictions},\ }in\ \href
  {https://proceedings.neurips.cc/paper/2017/file/8a20a8621978632d76c43dfd28b67767-Paper.pdf}
  {\emph {\bibinfo {booktitle} {Advances in Neural Information Processing
  Systems}}},\ Vol.~\bibinfo {volume} {30},\ \bibinfo {editor} {edited by\
  \bibinfo {editor} {\bibfnamefont {I.}~\bibnamefont {Guyon}}, \bibinfo
  {editor} {\bibfnamefont {U.~V.}\ \bibnamefont {Luxburg}}, \bibinfo {editor}
  {\bibfnamefont {S.}~\bibnamefont {Bengio}}, \bibinfo {editor} {\bibfnamefont
  {H.}~\bibnamefont {Wallach}}, \bibinfo {editor} {\bibfnamefont
  {R.}~\bibnamefont {Fergus}}, \bibinfo {editor} {\bibfnamefont
  {S.}~\bibnamefont {Vishwanathan}},\ and\ \bibinfo {editor} {\bibfnamefont
  {R.}~\bibnamefont {Garnett}}}\ (\bibinfo  {publisher} {Curran Associates,
  Inc.},\ \bibinfo {year} {2017})\BibitemShut {NoStop}%
\bibitem [{\citenamefont {Sundararajan}\ \emph {et~al.}(2017)\citenamefont
  {Sundararajan}, \citenamefont {Taly},\ and\ \citenamefont
  {Yan}}]{IntegratedGradients}%
  \BibitemOpen
  \bibfield  {author} {\bibinfo {author} {\bibfnamefont {M.}~\bibnamefont
  {Sundararajan}}, \bibinfo {author} {\bibfnamefont {A.}~\bibnamefont {Taly}},\
  and\ \bibinfo {author} {\bibfnamefont {Q.}~\bibnamefont {Yan}},\ }\bibfield
  {title} {\bibinfo {title} {Axiomatic attribution for deep networks},\ }in\
  \href {https://proceedings.mlr.press/v70/sundararajan17a.html} {\emph
  {\bibinfo {booktitle} {Proceedings of the 34th International Conference on
  Machine Learning}}},\ \bibinfo {series} {Proceedings of Machine Learning
  Research}, Vol.~\bibinfo {volume} {70},\ \bibinfo {editor} {edited by\
  \bibinfo {editor} {\bibfnamefont {D.}~\bibnamefont {Precup}}\ and\ \bibinfo
  {editor} {\bibfnamefont {Y.~W.}\ \bibnamefont {Teh}}}\ (\bibinfo  {publisher}
  {PMLR},\ \bibinfo {year} {2017})\ pp.\ \bibinfo {pages}
  {3319--3328}\BibitemShut {NoStop}%
\bibitem [{\citenamefont {Mineh}\ and\ \citenamefont
  {Montanaro}(2022)}]{parallelAcceleration}%
  \BibitemOpen
  \bibfield  {author} {\bibinfo {author} {\bibfnamefont {L.}~\bibnamefont
  {Mineh}}\ and\ \bibinfo {author} {\bibfnamefont {A.}~\bibnamefont
  {Montanaro}},\ }\href {https://doi.org/10.48550/ARXIV.2209.03796} {\bibinfo
  {title} {Accelerating the variational quantum eigensolver using parallelism}}
  (\bibinfo {year} {2022})\BibitemShut {NoStop}%
\end{thebibliography}%


%

\onecolumngrid

\appendix

\section{Diagrams of the Circuits}

This section contains the circuits that we used for our experiments.

\begin{figure}[H]
    \begin{quantikz}
    \lstick{\(\ket{0}\)} & \gate{R_x(\varphi_0)} & \gate{R_y(\theta_0)} & \gate{R_x(\varphi_1)} & \qw
    \end{quantikz}
    \caption{Single-Qubit circuit that can learn the 2 by 2 Bars and Stripes by  using two pixels in a row or in a column. This equates to learning XOR.}
    \label{fig:single_qubit}
\end{figure}

\begin{figure}[H]
    \begin{quantikz}
    \lstick{\(\ket{0}\)} & \gate{R_x(\varphi_0)} & \gate{R_y(\theta_0)} & \ctrl{1} & \gate{R_y(\theta_2)} & \gate{R_x(\theta_4)} & \ctrl{1} & \gate{R_y(\theta_6)} & \qw\\
    \lstick{\(\ket{0}\)} & \gate{R_x(\varphi_1)} & \gate{R_y(\theta_1)} & \targ{} & \gate{R_y(\theta_3)} & \gate{R_x(\theta_5)} & \targ{} & \gate{R_y(\theta_7)} & \qw
    \end{quantikz}
    \caption{Two-Qubit circuit that can learn the 2 by 2 Bars and Stripes by  using two pixels in a row or in a column. This equates to learning XOR.}
    \label{fig:two_qubits}
\end{figure}

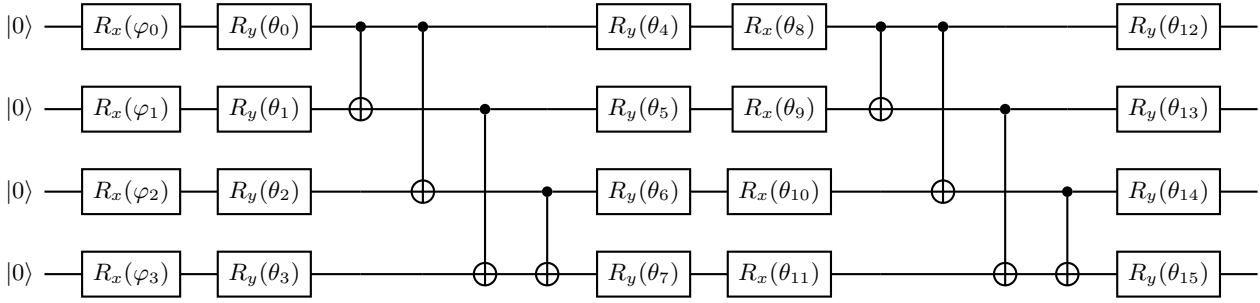
\begin{figure}[H]
    \begin{quantikz}
    \lstick{\(\ket{0}\)} & \gate{R_x(\varphi_0)} & \gate{R_y(\theta_0)} & \ctrl{1} & \ctrl{2} & \qw & \qw & \gate{R_y(\theta_4)} & \gate{R_x(\theta_8)} & \ctrl{1} & \ctrl{2} & \qw & \qw & \gate{R_y(\theta_{12})} & \qw\\
    \lstick{\(\ket{0}\)} & \gate{R_x(\varphi_1)} & \gate{R_y(\theta_1)} & \targ{} & \qw & \ctrl{2} & \qw & \gate{R_y(\theta_5)} & \gate{R_x(\theta_9)} & \targ{} & \qw & \ctrl{2} & \qw & \gate{R_y(\theta_{13})} & \qw\\
    \lstick{\(\ket{0}\)} & \gate{R_x(\varphi_2)} & \gate{R_y(\theta_2)} & \qw & \targ{} & \qw & \ctrl{1} & \gate{R_y(\theta_6)} & \gate{R_x(\theta_{10})} & \qw & \targ{} & \qw & \ctrl{1} & \gate{R_y(\theta_{14})} & \qw\\
    \lstick{\(\ket{0}\)} & \gate{R_x(\varphi_3)} & \gate{R_y(\theta_3)} & \qw & \qw & \targ{} & \targ{} & \gate{R_y(\theta_7)} & \gate{R_x(\theta_{11})} & \qw & \qw & \targ{} & \targ{} & \gate{R_y(\theta_{15})} & \qw
    \end{quantikz}
    \caption{Four-Qubit circuit that can learn the 2 by 2 Bars and Stripes by using all pixels in the image.}
    \label{fig:four_qubits}
\end{figure}

\section{Proof of Proposition \ref{prop:BaselinePolynomialSHAP}}

In this section we detail the proof for Proposition \ref{prop:BaselinePolynomialSHAP}. The following Lemma demonstrates our approach for the quadratic model and motivates our general approach.

\begin{mathLem}
	Consider the BS problem with input \(x\) and baseline \(b\) for a quadratic form model \(f:\R^{n+1}\rightarrow \R, x\mapsto x^TCx\) where \(C\in \R^{(n+1)\times (n+1)}\) denotes a symmetric matrix. With mean \(M=\frac 12(x+b)\) and difference \(\Delta = (x-b)\). The shap values can be computed as:
	\begin{equation}
		\label{eq:BS_SHAP_quadratic_form}
		\Sh(e) = 2 M^T C \Delta_e
	\end{equation}
\end{mathLem}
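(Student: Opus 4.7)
The plan is to expand the quadratic form, separate the piece that depends on $S$ from the piece that does not, and then evaluate the SHAP-weighted average of the $S$-dependent piece by a symmetry argument. For any $S \subseteq F \setminus \{e\}$ one has $g_{S \cup \{e\}} = g_S + \Delta_e$, where $\Delta_e$ is supported only at position $e$. Substituting into $f(y) = y^T C y$ and using the symmetry of $C$ yields
\[
f(g_{S \cup \{e\}}) - f(g_S) = 2\, g_S^T C \Delta_e + \Delta_e^T C \Delta_e.
\]
The second term is independent of $S$ and therefore factors out of the SHAP average; what is left is to compute the weighted expectation of $g_S$.

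For the first term I would pass to the permutation form of \eqref{eq:definition_of_baseline_shap_values}: $S$ is the prefix-before-$e$ of a uniformly random permutation of $F$. By symmetry, for any $i \neq e$, the event $\{i \in S\}$ has probability $1/2$, so the SHAP-weighted mean vector $\tilde M := \mathbb{E}[g_S]$ satisfies $\tilde M_i = (x_i + b_i)/2 = M_i$ for $i \neq e$, while $\tilde M_e = b_e$ because $e \notin S$. Thus $\tilde M$ agrees with $M$ everywhere except at coordinate $e$, where $\tilde M_e = M_e - \tfrac12(x_e - b_e)$.

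Assembling the two pieces gives $\Sh(e) = 2\, \tilde M^T C \Delta_e + \Delta_e^T C \Delta_e$. Because $\Delta_e$ is supported only at position $e$, the mismatch $2(M - \tilde M)^T C \Delta_e = (x_e - b_e)(C \Delta_e)_e$ coincides exactly with $\Delta_e^T C \Delta_e$. Hence the diagonal residual absorbs the correction and the expression collapses to the claimed formula $\Sh(e) = 2 M^T C \Delta_e$.

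The only delicate point I anticipate is bookkeeping at the $e$-th coordinate: a naive replacement of $g_S$ by the symmetric midpoint $M$ would be wrong there, and it is important to see that the quadratic residual $\Delta_e^T C \Delta_e$ is precisely what restores the identity rather than a coincidence to be verified by brute force. Framing the calculation this way also indicates how the argument will scale up in the main proof of Proposition \ref{prop:BaselinePolynomialSHAP}, where the trinomial split in $(j,m,k)$ plays the role of the three terms appearing here (constant-in-$S$ contribution, linear-in-$g_S$ contribution, and diagonal correction at coordinate $e$).
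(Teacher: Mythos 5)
Your proof is correct, but it takes a genuinely different route from the paper's. You expand $f(g_S+\Delta_e)-f(g_S)=2\,g_S^TC\Delta_e+\Delta_e^TC\Delta_e$ and evaluate the linear term by passing to the permutation form, computing the first moment $\tilde M=\E[g_S]$ from the prefix probability $\Prob(i\in S)=1/2$; the diagonal residual $\Delta_e^TC\Delta_e$ then exactly cancels the mismatch $2(M-\tilde M)^TC\Delta_e$, since $M-\tilde M=\tfrac12\Delta_e$ as vectors. The paper instead symmetrizes first, using $a^TCa-a'^TCa'=(a+a')^TC(a-a')$ to write each summand as $2\bigl(M+\tfrac12(\Delta_S-\Delta_{\overline S})\bigr)^TC\Delta_e$ with no residual at all; the $S$-dependent part is killed by the weight-preserving involution $S\mapsto\overline S$, and the surviving constant is scaled by the weight sum, which is evaluated explicitly to be $1$. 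Your probabilistic argument is more elementary for the quadratic case: it needs only a first moment and the fact that the SHAP weights form a probability measure, with no involution and no weight-sum computation. What the paper's midpoint form buys is generalizability: it is precisely the device reused in the degree-$r$ proof of Proposition \ref{prop:BaselinePolynomialSHAP}. On that point your closing remark is misleading — the paper's trinomial split $(j,m,k)$ indexes powers of $M$, $\Delta_e/2$ and $(\Delta_S-\Delta_{\overline S})/2$ \emph{after} symmetrization, and there is no analogue of your diagonal residual there. Scaling your decomposition up to degree $r$ would require all higher mixed moments of $g_S$ under the permutation distribution, whose coordinates are correlated (membership events are coupled through the random permutation), and controlling those is exactly the combinatorial difficulty that the paper's symmetrization and subsequent signature/difference-sequence computation is designed to avoid.
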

\begin{proof}
	Using \(b^TCa = a^TCb\) for all \(a, b\in \R^{n+1}\):
	\begin{align*}
		\Sh(e) &= \sum\limits_{S\subseteq [n+1]\setminus\{e\}} \frac{|S|!|\overline S|!}{(n+1)!} [(x_S + x_e + b_{\overline S})^TC (x_S + x_e + b_{\overline S}) - (x_S + b_e + b_{\overline S})^TC (x_S + b_e + b_{\overline S})]\\
		&=\sum\limits_{S\subseteq [n+1]\setminus\{e\}} \frac{|S|!|\overline S|!}{(n+1)!} 2\left(x_S+b_{\overline S} + \frac12(x_e+b_e)\right)^TC(x_e-b_e)
	\end{align*}
	Evaluating the left side of the form:
	\begin{align*}
		&\quad 2\left(x_S+b_{\overline S} + \frac12(x_e+b_e)\right)\\
		&= 2\left(\frac {x_S+x_e+x_{\overline S} + b_S+b_e+b_{\overline S}}2 + \frac{x_S-b_S - (x_{\overline S}-b_{\overline S})}2\right)\\
		&= 2\left(M + \frac 12 (\Delta_S-\Delta_{\overline S})\right)
	\end{align*}
	Thus:
	\begin{equation*}
		\Sh(e) = \sum\limits_{S\subseteq [n+1]\setminus\{e\}} \frac{|S|!|\overline S|!}{(n+1)!} [2M^TC\Delta_e + (\Delta_S-\Delta_{\overline S})^TC\Delta_e]
	\end{equation*}
	Since we are summing over all subsets of \([n+1]\setminus \{e\}\) we have an even number of summands. The transform \(S\mapsto \overline{S}\) is bijective and maps the summand:
	\begin{equation*}
		\frac{|S|!|\overline S|!}{(n+1)!}(\Delta_S-\Delta_{\overline S})^TC\Delta_e \mapsto -\frac{|S|!|\overline S|!}{(n+1)!}(\Delta_S-\Delta_{\overline S})^TC\Delta_e.
	\end{equation*}
	Thus the second sum cancels out and we are left with:
	\begin{equation*}
		\Sh(e)=2M^TC\Delta_e\sum\limits_{S\subseteq [n+1]\setminus\{e\}} \frac{|S|!|\overline S|!}{(n+1)!} 
	\end{equation*}
	We evaluate the last sum. Instead of directly summing over all subsets \(S\) we sum over the subsets of given size \(s\) and then over all sizes. Assume we have subset of \(S\) of size \(s\). There are \(\binom{n}{s}\) ways to select \(s\) elements out of the set \([n+1]\setminus\{e\}\). Thus we can transform the sum:
	\begin{align*}
		\sum\limits_{S\subseteq [n+1]\setminus\{e\}} \frac{|S|!|\overline S|!}{(n+1)!} &= \sum\limits_{s=0}^{n}\sum\limits_{\substack{S\subseteq [n+1]\setminus\{e\}\\|S|=s}} \frac{s!(n-s)!}{(n+1)!}\\
		&= \sum\limits_{s=0}^{n}\frac{s!(n-s)!}{(n+1)!}\binom{n}{s}\\
		&= \sum\limits_{s=0}^{n} \frac 1{n+1} = 1
	\end{align*}
\end{proof}

For two vector spaces \(V, W\) and bases \(a_1,\ldots, a_n\) and \(b_1,\ldots, b_m\) we can represent any \(r\)-multilinear map \(\Lambda\) by the values of \(\Lambda(a_{i_1},\ldots,a_{i_r})\) in the base of \(W\). These values are denoted by \(\Lambda_{i_1,\ldots,i_r;j}\) for \(1\leq j\leq m; 1\leq i_1,\ldots,i_r\leq n\). If \(W\) is one-dimensional. This representation is thus naturally in the tensor product written as:
\begin{align*}
	\Lambda(v_1,\ldots,v_r) &= \sum\limits_{1\leq j_1\ldots j_r \leq n} \alpha_{1;j_1}\cdots \alpha_{r;j_r} \lambda(a_{i_1}\otimes\cdots\otimes a_{j_r})\\
	&= \sum\limits_{j=1}^m \sum\limits_{1\leq j_1\ldots j_r \leq n} \Lambda_{i_1,\ldots,i_r;j} \alpha_{1;j_1}\cdots \alpha_{r;j_r} \beta_j b_j
\end{align*}
If furthermore \(\Lambda\) is symmetric then for any permutation \(\pi\) we have \(\Lambda_{i_1,\ldots,i_r;j} = \Lambda_{i_{\pi(1)},\ldots,i_{\pi(r)};j}\) and that:
\begin{align*}
	\Lambda(v_1,\ldots,v_r) &= \sum\limits_{1\leq j_1\ldots j_r \leq n} \alpha_{1;j_1}\cdots \alpha_{r;j_r} \lambda(a_{i_1}\otimes\cdots\otimes a_{j_r})\\
	&= \sum\limits_{j=1}^m \sum\limits_{1\leq j_1\leq \ldots\leq  j_r \leq n} \sum\limits_{\substack{\beta\in\N^r\\|\beta|=r}}\binom{n}{\beta} \Lambda_{i_1,\ldots,i_r;j} \alpha_{1;j_1}\cdots \alpha_{r;j_r} \lambda(a_{i_1}\vee\cdots\vee a_{j_r})\\
	&= \sum\limits_{j=1}^m \sum\limits_{1\leq j_1\leq \ldots\leq  j_r \leq n} \sum\limits_{\substack{\beta\in\N^r\\|\beta|=r}}\binom{n}{\beta} \Lambda_{i_1,\ldots,i_r;j} \alpha_{1;j_1}\cdots \alpha_{r;j_r} \beta_j b_j
\end{align*}

\(r\)-multilinear maps are our generalisation for monomials. Below we introduce some non-standard notations for convenience:
\begin{enumerate}
	\item Let \(x\in \R^n\), then for any \(l\geq 0\) we let \(x^{\otimes l}\) denote the \(l\)-fold self-tensorproduct of \(x\);
	\item For \(\beta\in \N^n\) and \(|\beta|=\beta_1+\ldots+\beta_n = r\) we define \(\binom{r}{\beta}=\frac{r!}{\beta_1!\cdots \beta_k!}\);
	\item Let \(\beta \in \N^n\) be an n-dimensional index vector of rank \(r=|\beta|\). For \(\alpha\in\R^n\) we have 
	\begin{equation*}
		\alpha^{\vee r} = \sum\limits_{|\beta|=r}\underbrace{\binom{r}{\beta}\alpha_1^{\beta_1}\cdots \alpha_n^{\beta_n}}_{=:(\alpha^{\vee r})_{\beta}} e_1^{\vee \beta_1}\vee \cdots \vee e_n^{\vee \beta_n};
	\end{equation*}
	\item We introduce a generalization of the scalar product we call \(\odot\): \(\Lambda\odot M^{\otimes l} = \sum\limits_{i_{r-l+1}=1,\ldots,i_r=1}^n \Lambda_{\ldots i_{r-l+1},\ldots,i_r} M_{i_{r-l+1},\ldots,i_s}\) collapses the last dimension of the tensor \(\Lambda\). Since \(\Lambda\) is symmetric multilinear the exact sequence of collapse is irrelevant.
\end{enumerate}
With the above notation we can rewrite the multinomial expansion as:
\begin{align*}
	(x_1+\ldots+x_n)^k &= \sum\limits_{|\beta|=k} (x^{\vee k})_\beta = \sum\limits_{|\beta|=k} \binom{k}{\beta} x_1^{\beta_1}\cdots x_n^{\beta_n}\\
						&= \sum\limits_{1\leq \beta_1\ldots\beta_n\leq k} \underbrace{x_{\beta_1}\cdots x_{\beta_n}}_{=:(x^{\otimes k})_\beta} = \sum\limits_{\beta \in [1,n]^k} (x^{\otimes k})_\beta.
\end{align*}
Furthermore for symmetric multilinear forms we have:
\begin{equation*}
	(\Lambda \odot M^{\otimes r}) \odot x^{\otimes s} = \Lambda \odot (M^{\otimes r}\otimes x^{\otimes s}) = \Lambda \odot (x^{\otimes s}\otimes M^{\otimes r})
\end{equation*}
For convenience we drop the paranthesis. 

\begin{mathProp}
	Let \(V\) be a \(K\) vector space and \(\Lambda:V^r\mapsto K\) be a symmetric \(r\)-multilinear form. Let \(a_1, \ldots, a_r\in V\). A rank one decomposition of \(\Lambda\) is given by a natural number \(p\), vectors \(v_1,\ldots,v_p\in V\) and coefficients \(\lambda_1,\ldots,\lambda_p\in V\) such that:
	\begin{equation}
		\Lambda = \sum\limits_{i=1}^p \lambda_i v_i^{\otimes r}.
	\end{equation}
	Such a decomposition is not unique. The minimum number \(p\) for such a decomposition is called the (generalized) rank of \(\Lambda\). The evaluation of \(\Lambda\) is given by scalar products as follows:
	\begin{equation}
		\Lambda(a_1,\ldots,a_r) = \sum\limits_{i=1}^p \lambda_i \prod\limits_{j=1}^r \langle v_i, a_j\rangle
	\end{equation}
\end{mathProp}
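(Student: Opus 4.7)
The statement bundles a definition (rank-one decomposition and generalized rank) with one genuinely computational claim: the evaluation formula at the bottom. I would dispatch the computation directly after spelling out the tensor-product identification, and then add a brief remark on existence so that the generalized rank is non-vacuous.

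The key observation is to make the identification of symmetric $r$-multilinear forms with symmetric $r$-tensors explicit. Using the scalar product on $V$, the induced scalar product on $V^{\otimes r}$ satisfies $\langle v_1\otimes\cdots\otimes v_r,\, a_1\otimes\cdots\otimes a_r\rangle=\prod_{j=1}^r\langle v_j,a_j\rangle$ on pure tensors and extends bilinearly. Under this pairing, the form $\Lambda$ acts by $\Lambda(a_1,\ldots,a_r)=\langle \Lambda,\, a_1\otimes\cdots\otimes a_r\rangle$, and in particular a pure symmetric tensor $v^{\otimes r}$ corresponds to the form $(a_1,\ldots,a_r)\mapsto\prod_j\langle v,a_j\rangle$. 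Substituting the rank-one decomposition $\Lambda=\sum_{i=1}^p\lambda_i v_i^{\otimes r}$ and invoking bilinearity of the pairing in its left slot immediately produces the claimed evaluation formula $\Lambda(a_1,\ldots,a_r)=\sum_{i=1}^p\lambda_i\prod_{j=1}^r\langle v_i,a_j\rangle$. This is a two-line computation once the pairing has been set up, so the content is concentrated in getting the identification right.

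For existence of such a decomposition --- needed so that the minimum $p$ is well-defined --- I would invoke the classical Waring/polarization statement for symmetric tensors over a characteristic-zero field: each basis symmetric tensor $e_{i_1}\vee\cdots\vee e_{i_r}$ can be written as a rational combination of $r$-th powers of sign-weighted sums $\pm e_{i_1}\pm\cdots\pm e_{i_r}$. Expanding $\Lambda$ in a monomial basis of $S^r(V)$ and applying this identity term-by-term produces an explicit (generally non-minimal) decomposition; the minimum cardinality then exists by well-ordering of $\N$. The only mild obstacle is stating the polarization identity in a clean coordinate-free form, but this is textbook material, and within the paper only the evaluation formula itself is actually invoked in the PolynomialSHAP argument.
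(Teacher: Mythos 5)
Your argument is correct, and it is worth pointing out that the paper itself never actually proves this proposition: the proof environment attached to it in the appendix is devoted to deriving the PolynomialSHAP formula of Proposition \ref{prop:BaselinePolynomialSHAP}, and the evaluation identity \(\Lambda(a_1,\ldots,a_r)=\sum_{i=1}^p \lambda_i \prod_{j=1}^r\langle v_i,a_j\rangle\) is simply used there as though it were definitional. Your write-up supplies exactly the two pieces the paper leaves implicit. First, the identification of symmetric \(r\)-forms with symmetric \(r\)-tensors via the induced pairing \(\langle v_1\otimes\cdots\otimes v_r,\,a_1\otimes\cdots\otimes a_r\rangle=\prod_{j}\langle v_j,a_j\rangle\); without some such identification the equation \(\Lambda=\sum_i\lambda_i v_i^{\otimes r}\) does not even typecheck, since the left side is a form on \(V^r\) and the right side is a tensor --- the paper silently conflates the two, and your pairing makes the conflation legitimate, after which the evaluation formula is indeed a two-line consequence of bilinearity. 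Second, the existence of at least one rank-one decomposition via polarization/Waring, without which the ``generalized rank'' in the statement is vacuous; note that this step genuinely requires a restriction on \(K\) (characteristic \(0\) or larger than \(r\), so that the rational coefficients in the polarization identity make sense), a hypothesis absent from the statement but harmless here since the paper only ever uses \(K=\R\). One further defect you could have flagged: the statement declares the coefficients \(\lambda_1,\ldots,\lambda_p\in V\), when they should of course lie in \(K\) --- your proof implicitly corrects this, since scalars are what bilinearity of the pairing requires.
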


\begin{proof}
    With that in mind we can start evaluating BS values for a symmetric \(r\)-form:
    \begin{align*}
    	\Sh(e) &=\sum\limits_{S\subseteq [n+1]\setminus\{e\}} \frac{|S|!|\overline S|!}{(n+1)!} (\Lambda \odot (\underbrace{x_S+b_{\overline S} + x_e}_{=:A})^{\otimes r} - \Lambda \odot (\underbrace{x_S+b_{\overline S} + b_e}_{=:A'})^{\otimes r})
    \end{align*}
    The terms in the parenthesis can be rearranged as follows:
    \begin{align*}
    	A &= x_S+b_{\overline S} + x_e = \frac {x_S + x_e + x_{\overline S} + b_S + b_e + b_{\overline S} + x_S - b_S + x_e - b_e + b_{\overline S} - x_{\overline S}}2 \\
    	&= M + \frac {\Delta_e}2 + \frac{\Delta_S - \Delta_{\overline S}}2\\
    	A' &= M - \frac {\Delta_e}2 + \frac{\Delta_S - \Delta_{\overline S}}2
    \end{align*}
    Thus yielding:
    \begin{align*}
    	& \Sh(e) \\
    	&= \sum\limits_{S\subseteq [n+1]\setminus\{e\}} \frac{|S|!|\overline S|!}{(n+1)!}\sum\limits_{l,k} \binom{r}{l;k} (1-(-1)^l)\Lambda \odot M^{\otimes (r-m-k)}\otimes \left(\frac{\Delta_e}2\right)^{\otimes m} \otimes \left(\frac{\Delta_S - \Delta_{\overline S}}2\right)^{\otimes k}\\
    	&= \sum\limits_{m,k} \binom{r}{m;k} (1-(-1)^l)\Lambda \odot M^{\otimes (r-m-k)}\otimes \left(\frac{\Delta_e}2\right)^{\otimes m} \otimes \left(\sum\limits_{S\subseteq [n+1]\setminus\{e\}} \frac{|S|!|\overline S|!}{(n+1)!} \left(\frac{\Delta_S - \Delta_{\overline S}}2\right)^{\otimes k}\right)
    \end{align*}
    
    The above summands are zero if \(l\) is even or \(k\) is odd.
    
    	The preceeding discussion motivates us to evaluate the the term
	\begin{align*}
		 & \sum\limits_{S\subseteq [n+1]\setminus\{e\}} \frac{|S|!|\overline S|!}{(n+1)!}\left(\frac {\Delta_S-\Delta_{\overline S}}2\right)^{\otimes k}\\
		=& \sum\limits_{S\subseteq [n+1]\setminus\{e\}} \frac{|S|!|\overline S|!}{(n+1)!}\sum\limits_{\beta\in[1,n]^k}\epsilon(S)(\alpha^{\otimes k})_{\beta} e^{\otimes \beta}.
	\end{align*}
	Where \(\alpha_i = \frac 12(\Delta_S-\Delta_{\overline S})_i\) if \(i\in S\) and \(\alpha_i = -\frac 12(\Delta_S-\Delta_{\overline S})_i\) if \(i\in \overline S\). Let \(\epsilon_i\) denote a sign that is \(+1\) if \(i\in S\) and \(-1\) otherwise. Thus
	\begin{align*}
		& \sum\limits_{S\subseteq [n+1]\setminus\{e\}} \frac{|S|!|\overline S|!}{(n+1)!}\sum\limits_{\beta\in[1,n]^k}\epsilon(S)(\alpha^{\otimes k})_{\beta} e^{\otimes \beta}\\
		= &  \sum\limits_{\beta\in[1,n]^k}(\alpha^{\otimes k})_{\beta}e^{\otimes \beta}\sum\limits_{S\subseteq [n+1]\setminus\{e\}} \frac{|S|!|\overline S|!}{(n+1)!}\prod\limits_{\beta_i\in \overline S} (-1) .
	\end{align*}
	
	In the following we denote by \(L, 2l=|L|\) the set indices with odd exponents for a given monomial, whose number must be even. Let \(o = 2l - \overline o= 2l - |\overline S \cap L|\):
	\begin{align*}
		& \sum\limits_{\beta\in[1,n]^k}(\alpha^{\otimes k})_{\beta}e^{\otimes \beta}\sum\limits_{S\subseteq [n+1]\setminus\{e\}} \frac{|S|!|\overline S|!}{(n+1)!}\prod\limits_{\beta_i\in \overline S} (-1) \\
		= & \sum\limits_{\beta\in[1,n]^k} (\alpha^{\otimes k})_{\beta}e^{\otimes \beta}\sum\limits_{o=0}^{2l}\sum\limits_{\substack{S\subseteq [n+1]\setminus\{e\}\\|\overline S\cap L| = 2l -o}} \frac{|S|!|\overline S|!}{(n+1)!}\prod\limits_{\beta_i\in \overline S} (-1)
	\end{align*}
	
	The indices of odd exponents distribute over \(S\) and \(\overline S\), where \(o\) denotes the number of hits in \(S\) and \(\overline o\) in \(\overline S\), respectively. The signature \((o,\overline o), o+\overline o = 2l\) completely determines the sign of the \(\epsilon\) product. If \(o\) is even, then it is positive, otherwise negative. Thus the sum becomes:
	\begin{equation*}
		\prod\limits_{\beta_i\in\overline S} (-1) = (-1)^{2l - o}
	\end{equation*}
	Since only those terms of the monomial matter, that have odd exponents, we get a combinatorial problem as follows:
	\begin{quote}
		Given a signature of \((o, \overline o)\), how many ways are there to arrange \(S\) and \(\overline S\) to hit that signature?
	\end{quote}
	There need to be at least \(o\) elements in \(S\), but also \(\overline o=2l - o\) in \(\overline S\). The rest of the elements can be chosen freely.
	\begin{equation*}
		\binom{2l}{o}\binom{n-2l}{s-o}.
	\end{equation*}

	This can be used to evaluate the inner sum now:
	\begin{align*}
		& \sum\limits_{\substack{S\subseteq [n+1]\setminus\{e\}\\|S\cap L|=o}} \frac{|S|!|\overline S|!}{(n+1)!} (-1)^{2l-o}\\
		=&\sum\limits_{s=o}^{n-(2l-o)} \frac{s!(n-s)!}{(n+1)!} \binom{2l}{o}\binom{n-2l}{s-o} (-1)^{2l-o}\\
		=&\binom{2l}{o}\frac{(n-2l)!}{(n+1)!}(-1)^{2l-o}\sum\limits_{s=o}^{n-(2l-o)} \frac{s!}{(s-o)!}\frac{(n-s)!}{(n-s-(2l-o))!}\\
		=&\binom{2l}{o}\frac{(n-2l)!}{(n+1)!}(-1)^{2l-o}\sum\limits_{s=o}^{n-(2l-o)} s\cdots(s-o+1)\cdot (n-s)\cdots(n-(2l-o) + 1 - s)
	\end{align*}

	We notice the polynomial in the sum would also be zero in the case of \(0\leq s \leq o-1\) and \(n-(2l-o) + 1\leq s \leq n\) and so we can extend the range of the sum:

	\begin{equation*}
		\binom{2l}{o}\frac{(n-2l)!}{(n+1)!}(-1)^{2l-o}\sum\limits_{s=0}^{n} s\cdots(s-o+1)\cdot (n-s)\cdots(n-(2l-o) + 1 - s).
	\end{equation*}

	We claim now that the sum over the polynomial can always be solved and will yield a similar result depending on the parameters \(o, 2l, s, k\).
	
	\begin{mathDef}[Derived Sequences]
    	Let \(v:\Z \rightarrow \R\) be a sequence. Then the following are derived sequences:
    	\begin{itemize}
    		\item The (first) difference sequence: \(\Delta v:\Z \rightarrow \R, s\mapsto (\Delta v)(s):=v(s) - v(s-1)\);
    		\item The \(n\)-th difference sequence is recursively defined: \((\Delta^{n+1} v)(s) := (\Delta(\Delta^{n} v))(s)\);
    	\end{itemize}
    \end{mathDef}

	With
	\begin{align*}
		(\Delta u)(s) &= s\cdots (s-o+1)\\
		v(s) &= (n-s)\cdots (n - (2l - o) + 1 -s)
	\end{align*}
	We have:
	\begin{enumerate}
		\item \((\Delta u)(s)\) has roots at \(0, \ldots, o-1\),
		\item \(v(s)\) has roots at \(n - (2l - o) + 1, \ldots, n\),
		\item \(u\) could have the form \(u(s) = \frac 1{o + 1} (s+1)\cdots (s-o+1)\) and has roots at \(-1,\ldots o-1\),
		\item \(\Delta v(s) = -(2l - o) \cdot (n - 1 - s)\cdots (n - (2l - o) + 1 - s)\) and has roots at \(n - (2l - o) + 1, \ldots, n-1\).
	\end{enumerate}

	With the above convention we can formulate:
	\begin{align*}
		\sum\limits_{s=0}^n (\Delta u)(s)v(s) &= \sum\limits_{s=0}^n u(s)v(s) - \sum\limits_{s=0}^n u(s-1)v(s)\\
		&= \sum\limits_{s=0}^n u(s)v(s) - \sum\limits_{s=0}^n u(s-1)(v(s) - v(s-1)) - \sum\limits_{s=0}^n u(s-1)v(s-1)\\
		&= \sum\limits_{s=0}^n u(s)v(s) - \sum\limits_{s=-1}^{n-1} u(s)v(s) - \sum\limits_{s=0}^n u(s-1)\Delta v(s) \\
		&= u(n)\underbrace{v(n)}_{=0} - \underbrace{u(-1)}_{=0}v(-1) - \sum\limits_{s=-1}^{n-1} u(s)\Delta v(s)\\
		&=- \sum\limits_{s=-1}^{n-1} u(s)\Delta v(s+1).
	\end{align*}
	More generally we can formulate for \(1\leq j \leq 2l-o\):
	\begin{align*}
			\sum\limits_{s=0}^n (\Delta^j u)(s)v(s) &= - \sum\limits_{s=-1}^{n-1} (\Delta^{j-1} u)(s)\Delta v(s)\\
			&= (-1)^j \sum\limits_{s=-j}^{n-j} u(s) (\Delta ^j v)(s+j).
	\end{align*}

	So to evaluate the sum:
	\begin{equation*}
		\sum\limits_{s=0}^n s\cdots (s-o+1)\cdot (n-s)\cdots (n-(2l-o) + 1 -s)
	\end{equation*}
	we use the slightly different settings:
	\begin{align*}
		(\Delta^{2l - o} u)(s) &= s\cdots (s-o+1)\\
		v(s) &= (n-s)\cdots (n - (2l - o) + 1 -s).
	\end{align*}
	The \(i\)-th integral of \(\Delta^{2l - o} u\) has the form:
	\begin{equation*}
		(\Delta^{2l - o + i} u)(s) = \frac 1{(o+1)\cdots (o+i)} (s+i)\cdots (s-o+1)
	\end{equation*}
	having roots at \(-i,\ldots, o-1\).
	The \(i-th\) derivate of \(v\) has the form:
	\begin{equation*}
		(\Delta^i v)(s) = (-1)^i(2l - o)\cdots(2l - o - i + 1) (n - i - s)\cdots (n - (2l - o) + 1 - s)
	\end{equation*}
	Using this we can show that:
	\begin{align*}
		& \sum\limits_{s=0}^n (\Delta^{2l - o} u)(s)v(s + 2l - o) \\
		&= (-1)^{2l - o} \sum\limits_{s = - (2l - o)}^{n - (2l - o)} u(s)(\Delta^{2l - o}v)(s)\\
		&= (-1)^{2l - o}\sum\limits_{s = - (2l - o)}^{n - (2l - o)} \frac {o!}{(2l)!} (s + (2l -o))\cdots(s - o +1) (-1)^{2l - o}(2l - o)!\\
		&= \binom{2l}o^{-1}\sum\limits_{s = - (2l - o)}^{n - (2l - o)}  (s + (2l -o))
		\cdots(s - o + 1)\\
		&=\binom{2l}o^{-1}\left[\frac 1{2l + 1}(s + (2l -o)+1)\cdots(s - o + 1)\right]_{- (2l -o) - 1}^{n - (2l - o)}\\
		&= \frac 1{2l + 1}\binom{2l}o^{-1} (n+1)\cdots(n-2l + 1)
	\end{align*}
	Which means that 
	\begin{align*}
		\quad \quad &\binom{2l}{o}\frac{(n-2l)!}{(n+1)!}(-1)^{2l-o}\sum\limits_{s=0}^{n} s\cdots(s-o+1)\cdot (n-s)\cdots(n-(2l-o) + 1 - s) \\
		=&\binom{2l}{o}\frac{(n-2l)!}{(n+1)!} (-1)^{2l-o}  \frac 1{2l + 1}\binom{2l}o^{-1} (n+1)\cdots(n-2l + 1)\\
		=&\frac{(n-2l)!}{(2l+1)(n+1)!} (n+1)\cdots (n-2l+1) (-1)^{2l - o}\\
		=&\frac {(-1)^{2l -o}}{2l + 1}
	\end{align*}
	To evaluate the rest of the sums:
	\begin{equation}
		\sum\limits_{\beta \in [1,n]^r}(\alpha^{\otimes k})_{\beta}e^{\otimes \beta} \sum\limits_{o=0}^{2l} \frac {(-1)^{2l -o}}{2l + 1} = 
		\sum\limits_{\beta \in [1,n]^r}\frac 1 {2l+1}(\alpha^{\otimes k})_{\beta}e^{\otimes \beta}
	\end{equation}
	For an efficient evaluation of the remaining sum we have to know how \(l\) depends on \(\beta\). 
	
	It is easy to see, that given such a decomposition, storage and evaluation of such multilinear forms is much faster. Combining all of our results leeds to the following expression:
	\begin{align*}
		  & \Sh(e) \\
		= & 2\sum\limits_{\substack{j + m + k = r\\0\leq j,l,k\leq r\\ l\text{ odd}; k\text{ even}}} \binom{r}{j;m;k}\sum\limits_{\substack{\gamma \in \N^n\\|\gamma|=k}} \frac 1 {2l+1} \binom{k}{\gamma}\sum\limits_{i=1}^p \left\langle v_i, M\right\rangle^{j}\left\langle v_i, \frac{\Delta_e}2\right\rangle^m \lambda_i \prod\limits_{h=1}^n (v_{ih}\alpha_h)^{\gamma_h}
	\end{align*}
\end{proof}
\end{document}